\newtheorem{proposition}{Proposition}
\newtheorem{lem}{Lemma}
\newtheorem{definition}{Definition}
\tikzstyle{intg}=[draw,rounded corners,minimum size=4em,text centered,text width=2.2cm,drop shadow,fill=white,text height=0.3cm]
\tikzstyle{intg2}=[draw,rounded corners,minimum size=4em,text centered,text width=1.9cm,drop shadow,fill=white,text height=0.48cm]
\tikzstyle{intg3}=[draw,rounded corners,minimum size=4em,text centered,text width=1.6cm,drop shadow,fill=white,text height=0.48cm]
\begin{document}
%
\title{Profit and Strategic Analysis for MNO-MVNO Partnership}

\author{\IEEEauthorblockN{Nesrine Ben Khalifa, Amal Benhamiche, Alain Simonian, Marc Bouillon}
\IEEEauthorblockA{Orange Labs, France\\
Email:  firstname.name@orange.com}
}


%


\maketitle
\begin{abstract}
We consider a mobile market driven by two Mobile Network Operators (MNOs) and a new competitor Mobile Virtual Network Operator (MVNO). 
The MNOs can partner with the entrant MVNO by leasing network resources; however, the MVNO can also rely on other technologies such as free WiFi access points. Moreover, in addition to its connectivity offer, the MVNO can also draw indirect revenues from services due to its brand. In that framework including many  access technologies and several revenue sources, a possible partner MNO will then have to decide which wholesale price to charge the MVNO for its resources. This multi-actor context, added to the need to consider both wholesale and retail markets, represents a new challenge for the underlying decision-making process. In this paper, the optimal price setting is formulated as a multi-level optimization problem which enables us to derive closed-form expressions for the optimal MNOs wholesale prices and the optimal MVNO retail price. The price attractivity of the MVNO is also evaluated in terms of its indirect revenues and the proportion of resources leased from possible partner MNOs. Finally, through a game-theoretical approach, we characterize the scenario where both MNOs partner with the MVNO as the unique Nash equilibrium under appropriate conditions. 
\end{abstract}
%
\begin{IEEEkeywords} Network Economics, Strategic Partnership, Multi-Level Optimization, Non-Cooperative Game Theory
\end{IEEEkeywords}

%
\IEEEpeerreviewmaketitle

\section{Introduction}
Mobile telecommunication markets are usually covered by a few number of operators because of high infrastructure and spectrum license costs. In addition, mobile operators are facing the challenges of upgrading their networks to 5G technology in order to cope with the increasing demand of users' traffic. This context, as well as the virtualization of wireless networks, may lower the barrier for the entrance  of Mobile Virtual Network Operators (MVNO) to the market. While not possessing 
their own network infrastructure, MVNOs can lease capacities from Mobile Network Operators (MNOs) on the wholesale market to ensure wireless services to their customers.


A new generation of MVNOs has recently emerged, capable of leasing resources from different MNOs while also taking advantage of the available WiFi hotspots in order to propose a full connectivity offer to their customers. The latter can therefore afford to blindly use multiple network/technology services to establish the communications and use mobile Internet without any specific setting. Typically, such MVNOs can launch their activity thanks to specific inter-operator partnerships with MNOs for the cellular infrastructure utilization while the WiFi offloading part remains a unilateral decision. A recent example is offered by Google through its Project-Fi \cite{googleFi}, launched in the U.S. in partnership with three leading MNOs, namely Sprint, T-Mobile and U.S. Cellular.    
This illustrates the case of a competitive MVNO able to absorb a significant part of the retail market (on mobile and data services) from the MNOs without being an expert in networking or even possessing the physical infrastructures and the ability to manage them. 

On the other hand, an entrant MVNO can be already positioned at a higher place in the Value Chain and have income from its new customers through its current ``non telco'' activity. In fact, this MVNO may have a good reputation as an OTT ("Over the Top") providing content or other high level services; it can thus draw significant additional revenue, hereafter termed as \textit{indirect} to differentiate it from the telco offer, from its new customers.  

This new multi-technological and multi-activity context therefore raises new questions as to the interactions between MNOs and such a MVNO, namely the optimal price setting adopted by all actors and the consequences on the respective market shares and profits. The economic viability of a possible partnership between MNOs and the MVNO should, in particular, be understood on account of the existence of alternative technologies and other sources of revenues.

\subsection{State-of-the-art} 
The ecosystem of MNO-MVNO relationships has been addressed in many studies. A detailed description of the MVNOs' classes in terms of their dependence to the host operator is presented in \cite{liau}, where it is shown that a MVNO can be classified as either {\it{light}} or {\it{full}}. The authors describe, in particular, the possible business models of MVNOs and examine the impact of different parameters such as the MNO's market share on the outcome of cooperation between the host and virtual operators. 
In \cite{banerjee}, the authors analyze the incentives for MNOs to form a strategic cooperation with MVNOs. They specifically examine the effects of the brand appeal of the MVNO and the wholesale discount offered by the MNO on the fulfillment of mutually beneficial partnerships. 

Besides, the economic viability of MNO-MVNO relationship has been investigated in \cite{infocom17, net_economics} and references therein. First in \cite{infocom17}, it is argued that the business of a MVNO may be profitable in a transitory phase when it partners with MNOs with a small market share; however, it is shown in \cite{net_economics} that, in the long term, the MVNO is better off when it preferably partners with a big MNO, say, the incumbent. The latter point of view of a stabilized market and mature economic actors will be adopted in the sequel.

A multi-stage game for modeling the MNO-MVNO interaction is presented in \cite{mvno_2017} where the MNO investment, the MVNO's decision on the leasing from the MNO and the retail pricing are successively focused on at each stage. In \cite{globecom_2011}, spectrum leasing and pricing are studied with game-theoretic models. A comprehensive study of the market share between MNO and MVNO based on the brand appeal of the MNO is provided in \cite{Debbah2012} where the authors also use game theory tools. 

A recent study of Google-Fi like MVNOs is provided in \cite{wiopt2017} where the price setting between multiple service providers and the virtual operator is examined. In that paper, 
the user defection rate to the MVNO is assumed to be simply constant with no account of the impact of the MVNO price on its customer base; further, the authors only optimize the MVNO price for given wholesale MNO prices. 

In contrast, we will here consider a more accurate economic model wherein (1) using the so-called price-demand elasticity, the users reply to the MVNO offer depends on the varying \textit{price difference} between the MVNO offer and that of the other MNOs; (2) beside the search of an optimal retail price for the MVNO, we also determine the wholesale MNO prices by \textit{maximizing their respective profit}.

\subsection{Addressed issues and contributions} 
In this paper, we make a thorough economic analysis of strategic MNO-MVNO partnership. We consider the upcoming of a new MVNO, a potential competitor proposing low-cost services and threatening the MNO market share. Specifically, we address the following questions:
\begin{itemize}
\item[$\bullet$] When a MNO decides to conclude a partnership with the MVNO, what is the best price setting for the partner MNO in order to maximize its profit? 
\item[$\bullet$] When the wholesale prices are fixed by partner MNOs, what is the optimal price that should be charged by the MVNO to its customers ?
\item[$\bullet$] What is the impact of the MVNO's indirect revenues on its optimal retail price ?
\item[$\bullet$] What is the best decision for the MNOs facing the entry of the MVNO?
\end{itemize}
In this aim, the optimal price setting is addressed via a Stackelberg Game involving leaders and followers \cite{net_economics} (Section 2.3.6). In particular, we define and study two different decision-making models for the optimal wholesale price setting, namely a {\it{fully sequential}} model and a {\it{partially sequential}} model; we then determine the optimal retail price of the MVNO. Furthermore, we study the impact of the MVNO's indirect revenue on its optimal retail price in both decision-making models. Finally, we propose a game-theoretical approach to determine the Nash equilibrium under sufficient conditions on this indirect revenue and discuss its economic interpretation.
\subsection{Paper structure}
In Section \ref{model}, we formally introduce the economic model describing the interactions between MNOs and the MVNO. In Section \ref{pricing}, we formulate the price setting problem and study the wholesale and retail price optimization for all actors. A game-theoretical setting is discussed in Section \ref{nash}. In Section \ref{numerical}, we comment our general results on economic grounds. Finally, some concluding remarks are given in Section \ref{conc}.
\section{Economic Model and Assumptions} \label{model}
We consider a mobile operator market composed of two MNOs that share the whole customer base. The upcoming of a new MVNO proposing an attractive price 
impacts the repartition of customers who can defect from their original operators to the benefit of this MVNO. The MNOs must then identify the best decisions they are able to make, that is, to decide whether to partner with the MVNO. An operator that would decide to contract a partnership with the MVNO would obviously not be immune to lose customers but might, nevertheless, recover a part of the lost retail revenue via the wholesale income. 
Given this ecosystem, the economic variables of the MNOs and MVNO activity can be described as follows:

{\bf{MNO's profit:}} the profit of a MNO is the sum of its revenues obtained from the retail market for end-users (and possibly from the Business-to-Business wholesale market for partner MNOs) from which are subtracted the corresponding costs. Denote by $p_{i}$ (resp. $w_{i}$) the unit price of MNO $i$ on the retail (resp. the wholesale) market; the retail revenues then depend linearly on the operator's customer base, while the wholesale revenues depend linearly on the amount of MVNO traffic accommodated by the MNO's network. Besides, we consider unit \textit{network costs} $c_i$, \textit{non-network costs} $\widetilde c_i$ (IT, commercial, etc.) and \textit{fixed costs} 
$\overline{C}_i$;  as a linear approximation, the total network and non-network costs are assumed to depend linearly on the customer base whereas fixed costs are independent of the amount of the operator's customer base;  



{\bf {MVNO's profit:}} the total revenue of the MVNO is the sum of \textit{direct} revenues obtained from the retail market (with net unit price $p_{0}$) and of \textit{indirect} revenues (with unit price 
$r_{0}$) obtained from e.g. advertising. As to network costs, we here consider that the MVNO enables its users' devices either to automatically connect to a free public WiFi Access Point (AP) or to a partner MNO's network, depending on the Quality of Service of each access mode. The proportion of MVNO traffic handled by free WiFi APs is denoted by $\gamma \in [0,1[$; the only network costs of the MVNO are therefore those caused by the other proportion $1 - \gamma$ of traffic dealt with MNOs through partnerships. In addition to these network costs, the MVNO finally incurs non-network and fixed costs denoted by $\widetilde c_0$ and $\overline{C}_0$, respectively.

For both MNOs, we assume that long-distance (backhaul and core) network costs are negligible compared to the access costs; the possible wholesale offer of any MNO will therefore mainly account for the transportation through its cellular access network of MVNO traffic.     

\begin{table}[t]\caption{Key Terms and Symbols} 
\begin{tabular}{|c|l|}
\hline
{\bf{Symbol}} &{{\bf{Definition} ($i$ = 1, 2)}} \\ \hline
$p_i$, $c_i$, $\widetilde c_i$& Unit retail price, network and non-network costs \\
 &             of MNO $i$ \\ \hline
$\overline{C}_i$ & Fixed costs of MNO $i$ \\ \hline
$Q_i$ & Customer base of MNO $i$ before the MVNO's entry \\ \hline
$Q$ & Total customer base \\ \hline
$\pi_i= Q_i/Q$ & Market share of MNO $i$ before the MVNO's entry \\  \hline
$Q_{i,0}$ &  Customer base of MNO $i$ which defects to the MVNO \\ \hline
$Q_0$ & Total customer base of the MVNO \\ \hline
$r_0$ & Net unit indirect revenues of the MVNO  \\ \hline
$p_0$, $\widetilde c_0$  & Unit retail price, non-network costs of the MVNO \\ \hline
$\overline{C}_0$ & Fixed costs of the MVNO \\ \hline
$w_i$ & Unit wholesale price charged by MNO $i$ to the MVNO \\ \hline
$\gamma$  & Proportion of MVNO traffic handled by \\
&           free WiFi APs \\ \hline
\end{tabular}\label{tab:notation} 
\end{table}

{\bf{Users Behavior:}} the users' reply to the MVNO's offer is assumed to depend only on the relative price of that offer. Specifically, the behavior of users is modeled according to the price-demand elasticity \cite{economics} so that the part of MNO $i$ customers which defects to the MVNO is expressed by
\begin{eqnarray}\label{defection}
Q_{i,0} = \varepsilon \, Q_i \left ( \frac{p_i-p_0}{p_i} \right )
\end{eqnarray}
where $Q_i$ denotes the customer base of MNO $i$ before the MVNO joins the market and $\varepsilon > 0$ is the price-demand elasticity coefficient (although generally depending on prices, $\varepsilon$ is here assumed to be a constant on the basis of a small price variability range). Without loss of generality, we assume throughout the paper that $p_{2}\leqslant p_{1}$ and $p_{0}\leqslant p_{2}$; this ensures that both $Q_{1,0}$ and $Q_{2,0}$ are non-negative. The MVNO's total customer base is then
\begin{eqnarray}\label{Q0}
Q_0=Q_{1,0}+Q_{2,0}.
\end{eqnarray}

{\bf{MVNO traffic split:}} in the case when both MNOs partner with the MVNO, the MVNO traffic which is not supported by free WiFi APs is split between MNOs networks proportionally to their market share before the MVNO's upcoming. Consequently, the traffic transported for the MVNO on MNO $i$ network equals $(1 - \gamma)\pi_{i}Q_{0}$, where $\pi_i = Q_i/Q$ is the market share of MNO $i$ before the MVNO's entry (with $\pi_1 + \pi_2 = 1$). This is motivated by the fact that the MVNO will partner with a MNO all the more that the latter has a large market share.  

Note finally that a sample value of parameter $\gamma$ is given by the ratio of the duration spent on WiFi access to the overall duration of a given communication session; this ratio can then be averaged over all successive sessions to provide the mean proportion $\gamma$; the latter is clearly related to the given geographic density of the WiFi APs. 

\begin{table}[t]\caption{MVNO Traffic Split}
\centering
\begin{tabular}{|c|l|}
\hline
{\bf{Traffic Amount}} & {\bf{Accommodation}} \\ \hline
$\gamma Q_0$ & WiFi APs\\ \hline
$(1-\gamma) \pi_1 Q_0$ & MNO 1's Cellular BSs\\ \hline
$(1-\gamma)\pi_2 Q_0$ & MNO 2's Cellular BSs\\
\hline
\end{tabular} \label{tab:traffic_2}
\end{table}

Tables \ref{tab:notation} and \ref{tab:traffic_2} sum up the notation used in the paper.
\section{Wholesale and Retail Price Setting} 
\label{pricing}
In this section, we address the optimization of MNOs and MVNO profits in order to determine the optimal wholesale and retail prices. Specifically, we introduce several optimization problems where $w_{i}$, $i \in \{1,2\}$, and $p_{0}$ are the decision variables. In the sequel, we denote by "Part" the strategy consisting in partnering with the MVNO and by "NonPart" the strategy consisting in not partnering with the MVNO. 
In a competitive environment where the MNOs do not collude with each other, two scenarios can be envisaged:
\begin{itemize}
\item[$\mathbf{A.}$] Only one operator, either MNO $1$ or MNO $2$, decides to contract with the MVNO. We denote this scenario by (Part-NonPart) or (NonPart-Part), respectively;
\item[$\mathbf{B.}$] Both operators choose to contract with the MVNO. We denote this scenario by (Part-Part).
\end{itemize}
These two scenarios are successively analyzed below.
\subsection{\textbf{Scenario (Part-NonPart)}}
\label{sec:part-nonpart}
Consider first the case when only MNO $i$ proposes a wholesale offer to the MVNO, while MNO $-i$ decides not to partner with the new entrant (by convention, $i$ = 1 or 2 implies $-i$ = 2 or 1). In this case, the traffic $\gamma Q_0$ generated by the MVNO's users will be delivered through free WiFi APs and the remaining $(1-\gamma)Q_0$ will be handled by the partner MNO's network. The MVNO's profit is therefore given by
\begin{eqnarray}\label{u_0}
{\cal{R}}_0(p_0,w_i)=(p_0+r_0)Q_0-w_i (1-\gamma) Q_0-\widetilde c_0 Q_0-\overline{C}_0.
\end{eqnarray}
The optimal MVNO's retail price $p_0^*$ can then be determined by solving the following optimization problem:
\begin{eqnarray}\label{prob_0}
&&\underset{{0 \leqslant p_0 \leqslant p_2 }}{\mbox{ max }} {\cal{R}}_{0}(p_0,w_i).
\end{eqnarray}

\begin{lem}\label{lem:p0star}
In the (Part-NonPart) scenario, given the wholesale price $w_{i}$, the optimal MVNO retail price equals
\begin{eqnarray}\label{eq:p0star}
p_0^*(w_i)=\min(\widetilde p_0(w_i), p_2)
\end{eqnarray}
where
\begin{eqnarray}\label{p0tilde}
\widetilde p_0(w_i)=\frac{1-\gamma}{2}w_i+ \frac{Q}{2\mathcal{S}}+\frac{\widetilde c_0-r_0}{2},
\end{eqnarray}
with $\mathcal{S}= Q_1/p_1+ Q_2/p_2$.
\end{lem}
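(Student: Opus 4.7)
The plan is to reduce the optimization problem \eqref{prob_0} to the unconstrained maximization of a strictly concave quadratic in $p_{0}$, compute its unique stationary point in closed form, and then project onto the interval $[0,p_{2}]$.

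First, I would rewrite the MVNO customer base as an affine function of $p_{0}$. Summing the individual defection relations \eqref{defection} over $i\in\{1,2\}$ and using \eqref{Q0} yields $Q_{0}=\varepsilon\,(Q-p_{0}\mathcal{S})$, with $Q=Q_{1}+Q_{2}$ and $\mathcal{S}=Q_{1}/p_{1}+Q_{2}/p_{2}$ as defined in the statement. Injecting this expression into \eqref{u_0} gives $\mathcal{R}_{0}(p_{0},w_{i}) = \varepsilon\bigl(p_{0}+r_{0}-(1-\gamma)w_{i}-\widetilde c_{0}\bigr)\bigl(Q-p_{0}\mathcal{S}\bigr)-\overline{C}_{0}$, which is a quadratic polynomial in $p_{0}$ with leading coefficient $-\varepsilon\mathcal{S}<0$, hence strictly concave.

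Next, I would differentiate with respect to $p_{0}$ and solve the resulting linear first-order condition. A direct computation shows that the unique unconstrained maximizer is exactly $\widetilde p_{0}(w_{i})$ as defined in \eqref{p0tilde}. Strict concavity then implies that $\mathcal{R}_{0}(\cdot,w_{i})$ is nondecreasing on $(-\infty,\widetilde p_{0}(w_{i})]$ and nonincreasing on $[\widetilde p_{0}(w_{i}),+\infty)$.

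Finally, to enforce the box constraint $0\leqslant p_{0}\leqslant p_{2}$, it suffices to project the unconstrained maximizer onto this interval: under the implicit (and economically mild) assumption $\widetilde p_{0}(w_{i})\geqslant 0$, which one can check holds in the regime of interest (in particular when $w_{i}\geqslant 0$ and $r_{0}$ is not too large relative to $\widetilde c_{0}+Q/\mathcal{S}$), the projection equals $\min(\widetilde p_{0}(w_{i}),p_{2})$. This gives \eqref{eq:p0star}. The only point requiring some care is this nonnegativity check; the algebraic manipulations leading to the closed-form stationary point are otherwise straightforward.
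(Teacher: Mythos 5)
Your proof is correct and follows essentially the same route as the paper: express $Q_{0}=\varepsilon(Q-p_{0}\mathcal{S})$, observe that $\mathcal{R}_{0}$ is a strictly concave quadratic in $p_{0}$ (second derivative $-2\varepsilon\mathcal{S}$), solve the first-order condition for $\widetilde p_{0}(w_{i})$, and truncate at $p_{2}$. Your remark about the implicit lower-bound check $\widetilde p_{0}(w_{i})\geqslant 0$ is a small point of extra care that the paper's own proof silently skips, but it does not change the argument.
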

\begin{proof}
First observe that, in view of definitions \eqref{defection} and \eqref{Q0}, $Q_{0}$ involved in expression \eqref{u_0} is a linear function of variable $p_{0}$, thus making the profit $\mathcal{R}_{0}$ a quadratic function of $p_{0}$. Given the price $w_{i}$, the first order optimality condition ${\partial {\cal{R}}_0} (p_0,w_i) /{\partial p_0}=0$ 
for problem (\ref{prob_0}) then provides the critical point $\widetilde p_0(w_i)$ as given in \eqref{p0tilde}.
Besides, the second derivative $\partial^2{\cal{R}}_0(p_0,w_i)/\partial p_0^2$ is equal to the negative constant $-2\varepsilon\mathcal{S}$; ${\cal{R}}_0(p_0,w_i)$ is therefore a strictly concave function of $p_0$ with a unique maximum at price $p_0^*(w_i)$ given by \eqref{eq:p0star}. 
\end{proof}

Now, we consider the profit of MNO $i$ given by
\begin{align}
{\cal{R}}_i(p_0,w_i) = & \, p_i(Q_i-Q_{i,0})+w_i (1-\gamma)Q_0 \; - 
\nonumber \\
& \, c_i(Q_i- Q_{i,0}+(1-\gamma)Q_0) \; - 
\nonumber \\
& \, \widetilde c_i(Q_i-Q_{i,0}) - \overline{C}_i,
\nonumber
\end{align}
that is, 
\begin{eqnarray}\label{eq:Ri}
{\cal{R}}_{i}(p_0,w_i)= h_i(Q_i-Q_{i,0})+(w_i-c_i)(1-\gamma)Q_0-\overline{C}_i
\end{eqnarray}
where $h_i=p_i-c_i-\widetilde c_i \geqslant$ 0. The partner MNO $i$ seeks to maximize its profit ${\cal{R}}_i$. To this end, we replace $p_0$ involved in expression \eqref{eq:Ri} via $Q_{i,0}$ and $Q_0$ by its optimal value 
$p_0^*(w_i)$ derived in Lemma \ref{lem:p0star}; in fact, the MNO anticipates the best pricing strategy of the MVNO and thus sets its optimal wholesale price based on this anticipation. Define then
\begin{eqnarray}\label{eq:Rii}
\mathcal{R}_i^*(w_i)={\mathcal{R}}_i(p_0^*(w_i),w_i)
\end{eqnarray}
which is obtained through \eqref{eq:Ri} with $Q_{i,0}=\varepsilon Q_i (p_i-p_0^*(w_i))/p_i$ and $Q_0$ given by \eqref{Q0}. The optimization problem for MNO $i$ can then be expressed by
\begin{eqnarray}\label{prob_i}
&&\underset{{w_i \geqslant 0}}{\mbox{ max }} \mathcal{R}_i^*(w_i). 
\end{eqnarray}

\begin{proposition}\label{prop_one_fi}
In the (Part-NonPart) scenario, the optimal MNO's wholesale price equals
\begin{eqnarray}\label{w_star}
\widehat{w}_i= \min (\overline{w}_i, \widetilde w_i)
\end{eqnarray}
where we set
\begin{eqnarray}\label{w_bar}
\overline{w}_i=\frac{1}{1-\gamma} \left ( 2 p_2-\frac{Q}{\mathcal{S}}+r_0-\widetilde c_0 \right )
\end{eqnarray}
and 
\begin{eqnarray}\label{w_tilde}
\widetilde w_i= \frac{c_i}{2}+ \frac{1}{1-\gamma} \left ( \frac{h_i Q_i}{2p_i\mathcal{S}}+\frac{Q}{2\mathcal{S}}+\frac{r_0-\widetilde c_0}{2} \right ).
\end{eqnarray}
The optimal MVNO's retail price is then determined by
\begin{eqnarray}\label{p0tilde_final}
\widehat p_0(\widehat{w}_i)=\min(\widetilde p_0(\widehat{w}_i), p_2).
\end{eqnarray}
Defining the constant 
\begin{eqnarray}\label{ri0bar}
\overline{r}_{i,0}=\frac{h_iQ_i}{p_i\mathcal{S}} + c_i(1 - \gamma) + \frac{3Q}{\mathcal{S}} + \widetilde{c}_{0} - 4p_2,
\end{eqnarray}
we then have $\overline{w}_i \leqslant \widetilde w_i \iff p^{*}_{0}(\widehat{w}_i) = p_2 \iff r_0 \leqslant \overline{r}_{i,0}$. 
\end{proposition}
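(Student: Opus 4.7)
The plan is to reduce the MNO's optimization \eqref{prob_i} to a concave quadratic program on a suitable interval by means of Lemma~\ref{lem:p0star}, and then to read off the maximizer. Writing $p_{0}^{*}(w_{i})=\min(\widetilde{p}_{0}(w_{i}),p_{2})$, the affine map $w_{i}\mapsto\widetilde{p}_{0}(w_{i})$ in \eqref{p0tilde} crosses $p_{2}$ at a single value of $w_{i}$; solving $\widetilde{p}_{0}(\overline{w}_{i})=p_{2}$ for $\overline{w}_{i}$ recovers exactly formula \eqref{w_bar}. This cuts the wholesale axis into a region $A=[0,\overline{w}_{i}]$, on which $p_{0}^{*}(w_{i})=\widetilde{p}_{0}(w_{i})$, and a region $B=(\overline{w}_{i},\infty)$, on which $p_{0}^{*}(w_{i})=p_{2}$.

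On region $A$ I would substitute $p_{0}=\widetilde{p}_{0}(w_{i})$ both into $Q_{i,0}=\varepsilon Q_{i}(p_{i}-p_{0})/p_{i}$ and into the aggregate $Q_{0}=\varepsilon(Q-p_{0}\mathcal{S})$ (the latter obtained by summing $Q_{1,0}+Q_{2,0}$ from \eqref{defection}). Both are then affine in $w_{i}$, so that $\mathcal{R}_{i}^{*}(w_{i})$ becomes a quadratic polynomial in $w_{i}$. A direct computation of its second derivative yields the negative constant $-\varepsilon(1-\gamma)^{2}\mathcal{S}$, establishing strict concavity. Setting the first derivative to zero and solving for $w_{i}$ should reproduce the closed form $\widetilde{w}_{i}$ stated in \eqref{w_tilde}. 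Strict concavity then gives the maximizer of $\mathcal{R}_{i}^{*}$ on $A$ as $\widetilde{w}_{i}$ if $\widetilde{w}_{i}\leqslant\overline{w}_{i}$, and as the boundary point $\overline{w}_{i}$ otherwise, which collapses to $\widehat{w}_{i}=\min(\overline{w}_{i},\widetilde{w}_{i})$.

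The main obstacle is region $B$: there $Q_{0}$ is frozen at $\varepsilon Q_{1}(p_{1}-p_{2})/p_{1}$, so \eqref{eq:Ri} becomes an affine strictly increasing function of $w_{i}$, formally unbounded above. I would discard this branch by invoking the implicit MVNO participation requirement $\mathcal{R}_{0}\geqslant 0$, which caps $w_{i}$ from above and keeps the optimum inside $A$; formula \eqref{p0tilde_final} then follows by reapplying Lemma~\ref{lem:p0star} at $w_{i}=\widehat{w}_{i}$. The remaining equivalences are algebraic: subtracting \eqref{w_bar} from \eqref{w_tilde}, multiplying by $2(1-\gamma)$ and isolating $r_{0}$ shows that $\overline{w}_{i}\leqslant\widetilde{w}_{i}\iff r_{0}\leqslant\overline{r}_{i,0}$ with $\overline{r}_{i,0}$ as in \eqref{ri0bar}. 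Finally, $p_{0}^{*}(\widehat{w}_{i})=p_{2}\iff\widetilde{p}_{0}(\widehat{w}_{i})\geqslant p_{2}\iff\widehat{w}_{i}\geqslant\overline{w}_{i}$, and the minimum formula then forces $\widehat{w}_{i}=\overline{w}_{i}$, equivalently $\overline{w}_{i}\leqslant\widetilde{w}_{i}$, closing the chain.
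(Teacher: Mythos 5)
Your proposal is correct and follows essentially the same route as the paper's proof: split the wholesale axis at $\overline{w}_i$ (where $\widetilde p_0$ crosses $p_2$), discard the branch where $p_0^*=p_2$ because $\mathcal{R}_i^*$ is affine increasing there and an unbounded $w_i$ drives $\mathcal{R}_0$ to $-\infty$ (your ``participation'' argument is the same informal step the paper uses), then maximize the strictly concave quadratic on the remaining interval to get $\widehat{w}_i=\min(\overline{w}_i,\widetilde w_i)$, and finish with the algebraic equivalences. The only discrepancy is cosmetic: you give the second derivative as $-\varepsilon(1-\gamma)^2\mathcal{S}$ while the paper writes $-\varepsilon(1-\gamma)^2\mathcal{S}/2$; your constant is in fact the correct one, and in any case only its sign matters for the argument.
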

\begin{proof}
We consider the two cases $(a)$ $\widetilde p_0 (w_i) > p_2$ and 
$(b)$ $\widetilde p_0 (w_i) \leqslant p_2$. First consider case $(a)$. This corresponds to values of $w_i$ such that $w_i > \overline{w}_i$ where $\overline{w}_i$ is given by \eqref{w_bar}. Relation \eqref{eq:p0star} then yields $p_0^*(w_i) = p_2$  and by \eqref{eq:Rii}, we easily show that $\mathcal{R}_i^*(w_i) = {\mathcal{R}}_{i}(p_2,w_i)$ is a linear and increasing function of $w_i$. The optimal value of $\mathcal{R}_i^*$ is thus obtained at $w_{i} = +\infty$; but this unbounded price giving the value $-\infty$ for $\mathcal{R}_0$, case $(a)$ is thus eventually excluded.

Now consider case $(b)$. This corresponds to values of $w_i$ such that $w_i \leqslant \overline{w}_i$. Relation \eqref{eq:p0star} now yields $p_0^*(w_i) = \widetilde p_0 (w_i)$; using \eqref{eq:Rii} again and writing now $$Q_{i,0}=\varepsilon \frac{Q_i}{p_i}(p_i-\widetilde p_0 (w_i)), \quad Q_0=Q_{1,0}+Q_{2,0}$$
as functions of $w_i$, $\mathcal{R}_i^*(w_i)={\mathcal{R}}_i(\widetilde p_0(w_i),w_i)$ can then be easily expressed as a quadratic function of $w_i$. 
The first order optimality condition  $\partial \mathcal{R}_i^*(w_i)/\partial w_i=0$ for problem (\ref{prob_i}) yields the critical point $\widetilde w_i$ as given in (\ref{w_tilde}).
Besides, the second derivative $\partial^2\mathcal{R}_i^*(w_i)/\partial w_i^2$ is equal to the negative constant $-\varepsilon (1-\gamma)^2\mathcal{S}/2$. Therefore, $\mathcal{R}_i^*(w_i)$ is a strictly concave function of $w_i$ with a unique maximum at $\widehat{w}_i$ given by \eqref{w_star}.

The optimal MVNO's retail price is obtained by replacing $w_i$ in \eqref{eq:p0star} by $\widehat{w}_i$ given in (\ref{w_star}), hence (\ref{p0tilde_final}). Finally, elementary algebra reduces condition $\overline{w}_i \leqslant \widetilde w_i$ to $r_0 \leqslant \overline{r}_{i,0}$, with $\overline{r}_{i,0}$ given by (\ref{ri0bar}).
\end{proof}
All previous results symmetrically hold for the (NonPart-Part) scenario.

\subsection{\textbf{Scenario (Part-Part)}}
\label{subsection_two_partners}
We now turn to the situation where both MNOs partner with the MVNO. Two models can be proposed depending on the order in which decisions are taken, namely:   
\begin{itemize}
\item[$\bullet$] A \textit{Fully Sequential} (FS) model where a leader MNO (say, the one with the highest market share) first chooses its wholesale price; then the second MNO, the follower, determines its wholesale price accordingly; finally, the MVNO chooses its retail price. This situation is illustrated in Fig. \ref{sequential}-{\it{Left}}.
\item[$\bullet$] A \textit{Partially Sequential} (PS) model where first the two MNOs (say, with comparable weights) choose their respective wholesale price without coordination; then, the MVNO chooses its retail price. This situation is illustrated in Fig. \ref{sequential}-{\it{Right}}.
\end{itemize}
\begin{figure}[t!]
\centering
\begin{tikzpicture}[>=latex',auto]
\node [intg] (kp) {Leader MNO $i$ chooses $w_i$};
\node [intg] (ki33)[node distance=1.8cm,below of=kp] {MNO $-i$ chooses $w_{-i}$};
\node [intg] (ki4) [node distance=1.8cm,below of=ki33] {MVNO chooses $p_0$};
\draw[->,draw=black,fill=white,line width=1pt,] (kp) -- (ki33);
\draw[->,draw=black,fill=white, line width = 1pt] (ki33) -- (ki4);
\end{tikzpicture} \hskip -0.1cm \begin{tikzpicture}[>=latex',auto]
\node [intg2] (kp2)[right = 0.1cm of  kp ]  {MNO $1$ chooses $w_1$};
\node [intg2] (ki3) [node distance=2.5cm,right of=kp2] {MNO $2$ chooses $w_{2}$};
\node [intg] (ki4) [node distance=2.9cm ,below of=kp2, right = -1.0 cm of kp2 ] {MVNO chooses $p_0$};
\draw[->,draw=black,fill=white,line width=1pt,] ($(ki3.south)+(-0.05,0)$)  -- ($(ki4.north)+(0.65,0)$);
\draw[->,draw=black,fill=white,line width=1pt,] ($(kp2.south)+(-0.01,0)$)  -- ($(ki4.north)+(-0.6,0)$);
\draw[->,draw=white,fill=white,line width=1pt,] ($(kp2.south)+(-1.5,0)$)  -- ($(ki4.north)+(-0.6,0)$);
\end{tikzpicture}
\caption{Hierarchical decision models. {\it {Left, }} Fully sequential  model. {\it{Right, }} Partially sequential model.}
\label{sequential}
\end{figure}
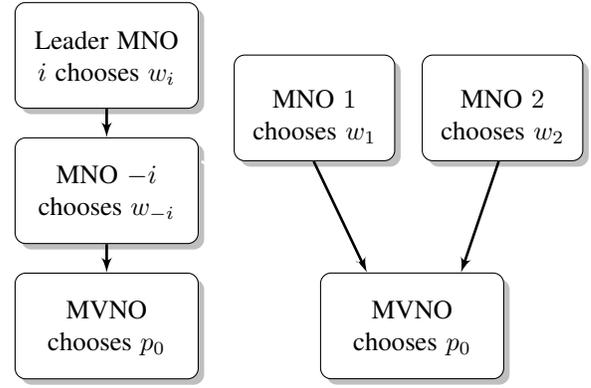 
In both models, we consider that the MNOs move before the MVNO because they own the resources  to lease to the latter. They forecast the MVNO's reply to their pricing strategies and choose the wholesale prices that maximize their profits, given the anticipated MVNO's optimal pricing strategy. The MVNO eventually chooses its retail price, given the wholesale prices fixed by the MNOs.

The MVNO's profit is now given by
\begin{eqnarray}\label{eq:R0FiFi}
{\cal{R}}_0(p_0,w_1,w_2)&=&(p_0+r_0)Q_0 - w_1 (1-\gamma)\pi_1Q_0- \nonumber \\ &&w_2 (1-\gamma) \pi_2 Q_0-\widetilde c_0 Q_0-\overline{C}_0
\end{eqnarray}
and the optimization problem of the MVNO is formulated by
\begin{eqnarray}\label{prob_0_fi_fi}
&&\underset{{0 \leqslant p_0 \leqslant p_2 }}{\mbox{ max }} {\cal{R}}_{0}(p_0,w_1,w_2). 
\end{eqnarray}
\begin{lem}
\label{lem:p0star_fi_fi}
In the (Part-Part) scenario, given the wholesale prices $w_1$ and $w_2$, the optimal retail price of the MVNO equals
\begin{eqnarray}\label{p0star_fi_fi}
p_0^*(w_1,w_2)=\min(\widetilde p_0(w_1,w_2),p_2)
\end{eqnarray}
where
\begin{eqnarray}\label{p0tilde_fi_fi}
\widetilde p_0(w_1,w_2)=\frac{(1-\gamma)}{2}\left(\pi_1 w_1 + \pi_2 w_2\right) + \frac{Q}{2\mathcal{S}} +\frac{\widetilde c_0 - r_0}{2}.
\end{eqnarray}
\end{lem}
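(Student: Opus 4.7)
The plan is to mimic the structure of the proof of Lemma \ref{lem:p0star}, adapted to the two-partner objective \eqref{eq:R0FiFi}. First, I would note that from \eqref{defection} and \eqref{Q0} the total defecting base $Q_0$ is an affine function of $p_0$; more explicitly, $Q_0 = \varepsilon \mathcal{S}(p_2 - p_0) + \varepsilon(Q_1/p_1)(p_2 - p_1) \cdot 0 + \ldots$, so that $Q_0$ has slope $-\varepsilon \mathcal{S}$ in $p_0$ (with $\mathcal{S} = Q_1/p_1 + Q_2/p_2$). Substituting this into \eqref{eq:R0FiFi}, and using $\pi_1 + \pi_2 = 1$, the revenue term $(p_0 + r_0)Q_0$ contributes the only quadratic piece while the cost terms $w_1(1-\gamma)\pi_1 Q_0 + w_2(1-\gamma)\pi_2 Q_0 + \widetilde c_0 Q_0$ contribute linearly in $p_0$ via $Q_0$. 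Thus $\mathcal{R}_0(p_0,w_1,w_2)$ is a quadratic polynomial in $p_0$.

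Next, I would compute the second derivative. Only the $(p_0 + r_0) Q_0 = p_0 Q_0 + r_0 Q_0$ term contributes a $p_0^2$ coefficient, and it produces $\partial^2 \mathcal{R}_0/\partial p_0^2 = -2\varepsilon \mathcal{S} < 0$, exactly as in Lemma \ref{lem:p0star}. The unconstrained maximizer $\widetilde p_0(w_1,w_2)$ is therefore unique and given by the first-order condition $\partial \mathcal{R}_0/\partial p_0 = 0$. Solving this equation, the grouping $w_1(1-\gamma)\pi_1 + w_2(1-\gamma)\pi_2 = (1-\gamma)(\pi_1 w_1 + \pi_2 w_2)$ naturally appears in place of the single $w_i(1-\gamma)$ of Lemma \ref{lem:p0star}, while the remaining terms $Q/(2\mathcal{S})$ and $(\widetilde c_0 - r_0)/2$ carry over unchanged, yielding \eqref{p0tilde_fi_fi}.

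Finally, I would impose the constraint $0 \leqslant p_0 \leqslant p_2$. Strict concavity plus uniqueness of the interior critical point imply that the maximizer over $[0, p_2]$ is $\widetilde p_0(w_1,w_2)$ whenever it lies inside the interval, and is $p_2$ otherwise (the lower bound $p_0 \geqslant 0$ being non-binding for sensible parameters, exactly as tacitly assumed in Lemma \ref{lem:p0star}). Hence $p_0^*(w_1,w_2) = \min(\widetilde p_0(w_1,w_2), p_2)$, which is \eqref{p0star_fi_fi}.

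The only substantive work here is the algebraic derivation of $\widetilde p_0(w_1,w_2)$ from the first-order condition; this is mechanically identical to the single-partner case, with $w_i(1-\gamma)$ simply replaced by $(1-\gamma)(\pi_1 w_1 + \pi_2 w_2)$ because the wholesale cost incurred by the MVNO on traffic unit is now the market-share-weighted average of the two MNOs' wholesale prices. I do not expect any genuine obstacle beyond keeping the bookkeeping of the linear and quadratic coefficients in $p_0$ straight.
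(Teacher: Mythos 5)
Your proof is correct and follows exactly the route the paper intends: the paper's own proof of Lemma~\ref{lem:p0star_fi_fi} is literally ``Similar to that of Lemma~\ref{lem:p0star}'', and your argument is that of Lemma~\ref{lem:p0star} with $w_i(1-\gamma)$ replaced by $(1-\gamma)(\pi_1 w_1+\pi_2 w_2)$, yielding the same concavity constant $-2\varepsilon\mathcal{S}$ and the same first-order condition. The only blemish is your garbled intermediate expression for $Q_0$ --- it is simply $Q_0=\varepsilon Q-\varepsilon\mathcal{S}\,p_0$ --- but since you only use its slope $-\varepsilon\mathcal{S}$, nothing downstream is affected.
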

\begin{proof}
Similar to that of Lemma \ref{lem:p0star}.
\end{proof}
In this section, the notations for profit $\mathcal{R}_i$ should not be confused with that of Section \ref{sec:part-nonpart}.
Now, consider the profit of MNO $i\in\{1,2\}$ given by
\begin{eqnarray*}
{\cal{R}}_i(p_0,w_i)&=&p_i(Q_i-Q_{i,0})+w_i (1-\gamma) \pi_i Q_0 -\nonumber \\ &&  c_i(Q_i-Q_{i,0} +  (1-\gamma) \pi_i Q_0)- \nonumber \\ &&  \widetilde c_i(Q_i-Q_{i,0})-\overline{C}_i,
\end{eqnarray*}
that is, 
\begin{align}
\label{eq:Ri_Fi_Fi}
{\cal{R}}_{i}(p_0, w_i) = & \; h_{i} (Q_{i}-Q_{i,0}) \; +
\\ \nonumber
& \; (w_{i}-c_{i})(1-\gamma) \pi_{i} Q_0-\overline{C}_{i}
\end{align}
where we set $h_{i}=p_{i}-c_{i}-\widetilde c_{i}$. Both MNO partners 1 and 2 seek to maximize their own profit ${\cal{R}}_1$ and ${\cal{R}}_2$. In order to solve this optimization problem for either model (FS) or (PS), we replace $p_0$ by its optimal value $p_0^*(w_1, w_2)$ derived above in Lemma \ref{lem:p0star_fi_fi}. Define then 
\begin{eqnarray}\label{eq:RiiFiFi}
\mathcal{R}_{i}^*(w_1,w_2) = {\mathcal{R}}_i(p_0^*(w_1,w_2),w_i)
\end{eqnarray}
as obtained from \eqref{eq:Ri_Fi_Fi} with $Q_{i,0}=\varepsilon Q_i (p_i-p_0^*(w_1, w_2))/p_i$ and $Q_0$ given by \eqref{Q0}. 

We now successively address the maximization of MNOs profits for the (FS) and (PS) models.
\vspace{0.1in}
\subsubsection{\textbf{Fully Sequential Model}} 
Assume that MNO $i \in \{1,2\}$ is the leader and MNO $-i$ is the follower (by convention, $-i = 2$ if 
$i = 1$, and $-i = 1$ if $i = 2$). Given $w_i$, the follower thus decides on the wholesale price $w_{-i}$ to charge the MVNO. The optimization problem for both MNOs can then be expressed by the following bilevel formulation
\begin{equation}\label{pb:FS}
\left\{
\begin{array}{l}
\underset{w_i\geqslant 0}{\mbox{ max }}\mathcal{R}_{i}^*(w_1, w_2)_{\vert w_{-i} = w_{-i}^*}, \\ \\
\mbox{subject to} 
\quad\quad w_{-i}^* = \underset{w_{-i}\geqslant 0}{\mbox{argmax }}\mathcal{R}_{-i}^*(w_1, w_2)
\end{array} \right.
\end{equation}
where the notation $\vert w_{-i} = w_{-i}^*$ means that function $\mathcal{R}_i^*$ is evaluated for the variable $w_{-i}$ equal to $w_{-i}^*$; note that $w_{-i}^*$ is a function of $w_{i}$. The symmetrical case when MNO $-i$ is the leader is similarly defined. In order to solve problem (\ref{pb:FS}), we introduce the following definitions.

\begin{definition}
\label{def:delta}
We denote by $\Delta$ the closed triangular region defined by 
$\Delta$ = $\{(w_1, w_2)\in\mathbb{R}^+\times\mathbb{R}^+: \widetilde p_0(w_1, w_2)\leqslant p_2\}$ 
where $\widetilde p_0(w_1, w_2)$ is given by \eqref{p0tilde_fi_fi} (see Fig. \ref{fig:delta}).
   
\vspace{0.1in}
Further denote by $\delta$ the boundary segment of $\Delta$ defined by 
$\delta$ = $\{(w_1, w_2)\in\Delta: \widetilde p_0(w_1, w_2) = p_2\}$. 
\end{definition}
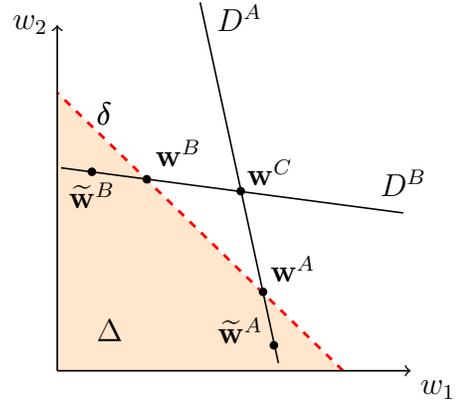
\begin{figure}[!ht] 
\centering
\begin{tikzpicture} 
\draw[draw=white,fill=orange!20!white] (0,0) -- (3.8,0) -- (0,3.7) -- cycle; 
\draw[dashed,line width=1.05 pt,draw=red] (3.8,0) -- (0,3.7);
\foreach \coordinate/\label/\pos in {(0.4,0.8)/{{\large{$\Delta$}}}/below right,(2.4,2.34)/{{\large{$\mathbf{w}^C$}}}/above right,(2.71,1.06)/{{\large{$\mathbf{w}^A$}}}/above right,(1.18,2.6)/{\large{$\mathbf{w}^B$}}/above right,(0.4,3.7)/{\large{$\delta$}}/below right,(2.42,4.4)/{\large{$D^A$}}/above,(0.4,3.7)/{\large{$\delta$}}/below right,(4.6,2.2)/{\large{$D^B$}}/above,(2.85,0.24)/{\large{$\widetilde{\mathbf{w}}^A$}}/above left,(0.46,2.63)/{\large{$\widetilde{\mathbf{w}}^B$}}/below}
\node[\pos] at \coordinate {\label};
\draw[line width=0.6pt, draw=black](2.94,0.1) -- (1.9,4.9) ;%
\draw [fill] (2.439,2.389) circle [radius=0.05];
\draw [fill] (2.736,1.05) circle [radius=0.05];
\draw [fill] (1.19,2.548) circle [radius=0.05];
\draw [fill] (2.88,0.339) circle [radius=0.05]; 
\draw [fill] (0.46,2.647) circle [radius=0.05]; 
\draw[line width=0.6pt](4.6,2.1) -- (0.05,2.7) ;
\draw[->, line width=0.64pt] (0.0,0) -- (4.7,0) node[below right] {{\large{$w_1$}}} coordinate(x axis); 
\draw[->, line width=0.64pt] (0,0.0) -- (0,4.6) node[left] {{\large{$w_2$}} }coordinate(y axis); 
\end{tikzpicture} 
\caption{The region $\Delta$, its boundary $\delta$, and the points $\mathbf{w}^A$, $\mathbf{w}^B$, 
$\widetilde{\mathbf{w}}^A$, $\widetilde{\mathbf{w}}^B$ and $\mathbf{w}^C$ (lines $D^A$ and $D^B$ are invoked in Appendix \ref{seq:Prop2}).} 
\label{fig:delta}
\end{figure}
\begin{definition}
\label{def:wAwB}
Let $\mathbf{w}^A=(w_1^A, w_2^A)$ and $\mathbf{w}^B=(w_1^B, w_2^B)$ denote the pair of prices given by
$$
\left\{
\begin{array}{ll}
w_{1}^A = \displaystyle \frac{\displaystyle \frac{h_1 Q_1}{p_1}+c_1 (1-\gamma)  \pi_1 \mathcal{S}+2Q-2\mathcal{S} p_2}{(1-\gamma)\pi_1 \mathcal{S}}, \nonumber \\
\\
w_{2}^A = \displaystyle \frac{\displaystyle -\frac{h_1 Q_1}{p_1}-c_1  (1-\gamma)\pi_1 \mathcal{S}-4Q+4 p_2 \mathcal{S}+\mathcal{T}}{(1-\gamma)\pi_2 \mathcal{S}}
\end{array}\right.
$$
and
$$
\left\{
\begin{array}{ll}
w_{1}^B =  \displaystyle \frac{\displaystyle -\frac{h_2 Q_2}{p_2}-c_2 (1-\gamma)\pi_2  {\cal{S}}-4Q+4 p_2 {\cal{S}}+\mathcal{T}}{(1-\gamma)\pi_1{\cal{S}}}, \nonumber \\
\\
w_{2}^B =  \displaystyle \frac{\displaystyle \frac{h_2Q_2}{p_2}+c_2 (1-\gamma)\pi_2{\cal{S}}+2Q-2 {\cal{S}} p_2}{(1-\gamma)\pi_2 {\cal{S}}},
\end{array}\right.
$$
respectively, with $\mathcal{T} = Q+(r_0-\widetilde c_0) \mathcal{S}$ for short. 

Define also the function $\Omega_{-i}$, $i \in \{1,2\}$, by 
\begin{eqnarray}
\label{eq:omega_i}
\Omega_{-i}(w_{i}) &=&\frac{c_{-i}}{2}+\frac{h_{-i}Q_{-i}}{2p_{-i}(1-\gamma)\pi_{-i}\mathcal{S}}+\frac{Q}{2(1-\gamma)\pi_{-i}\mathcal{S}}- \nonumber \\ && \frac{\widetilde c_0-r_0}{2(1-\gamma)\pi_{-i}}-\frac{\pi_i}{2\pi_{-i}}w_i
\end{eqnarray} 
and the points $\widetilde{\mathbf{w}}^B = (\widetilde w_1, \Omega_{2}(\widetilde w_1))$,  
$\widetilde{\mathbf{w}}^A = (\Omega_{1}(\widetilde w_2), \widetilde w_2)$ where
\begin{eqnarray}\label{eq:witilde_i}
\widetilde w_{i}=\frac{\displaystyle \frac{h_{i} Q_{i}}{p_{i}}-\frac{h_{-i} Q_{-i}}{p_{-i}}+\mathcal{S}(1 - \gamma)( c_{i} \pi_{i} - c_{-i} \pi_{-i}) + \mathcal{T}}{2(1-\gamma) \pi_{i} \mathcal{S}}.
\end{eqnarray}
\end{definition}
\noindent
Note that $\mathbf{w}^A\in\delta$ and $\mathbf{w}^B\in\delta$; a geometric interpretation of the pairs $\mathbf{w}^A$, $\mathbf{w}^B$ and $\widetilde{\textbf{w}}^A$, $\widetilde{\textbf{w}}^B$ is given in Appendix \ref{seq:Prop2}. We can now state the following.
\begin{proposition}\label{prop_FS}
Define the constant $\overline{r}_{0}$ by
\begin{equation*}
\overline r_0=\frac{h_1Q_1}{p_1 \mathcal{S}}+\frac{h_2Q_2}{p_2\mathcal{S}}+(1-\gamma) (\pi_1 c_1+\pi_2 c_2)+\frac{7 \, Q}{\mathcal{S}}+\widetilde c_0-8 p_2.
\end{equation*} 
In the (Part-Part) scenario with FS model,
\begin{itemize}
\item[$\bullet$] if $r_0\leqslant\overline r_0$ and MNO 1 (resp. MNO 2) is the leader, then the optimal wholesale price vector $(w_1^*, w_2^*)$ is given by $\textbf{w}^B \in \delta$ (resp. 
$\textbf{w}^A \in \delta$) introduced above. In either case, the optimal MVNO's retail price is then
$$
p_0^*(w_1^*,w_2^*)= p_2;
$$
\item[$\bullet$] if $r_0 > \overline r_0$ and MNO 1 (resp. MNO 2) is the leader, then the optimal wholesale price vector $(w_1^*, w_2^*)$ is given by $\widetilde{\mathbf{w}}^B \in \Delta \setminus \delta$ (resp. $\widetilde{\mathbf{w}}^A \in \Delta \setminus \delta$) introduced above. In either case, the optimal MVNO's retail price is then
$$
p_0^*(w_1^*,w_2^*)= \widetilde p_0(w_1^*,w_2^*)< p_2.
$$  
\end{itemize}
\end{proposition}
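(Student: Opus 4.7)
The proof applies backward induction to the bilevel problem \eqref{pb:FS}; I first treat the case when MNO~1 is the leader, the case of MNO~2 as leader being obtained by interchanging indices. Throughout, I restrict attention to pairs $(w_1, w_2) \in \Delta$, since any point outside $\Delta$ forces the MVNO's profit to $-\infty$ by exactly the argument used in the proof of Proposition~\ref{prop_one_fi}.

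First, I solve the follower's inner problem $\max_{w_2 \geqslant 0} \mathcal{R}_2^*(w_1, w_2)$ for a fixed leader price $w_1$. By Lemma~\ref{lem:p0star_fi_fi}, one has $p_0^* = \widetilde p_0$ on the interior of $\Delta$ and $p_0^* = p_2$ on the boundary $\delta$. Substituting this into \eqref{eq:Ri_Fi_Fi} shows that on $\Delta$, $\mathcal{R}_2^*$ is a strictly concave quadratic in $w_2$ (the second derivative being a negative constant, analogous to that appearing in Proposition~\ref{prop_one_fi}); the first-order optimality condition identifies the unique critical point as $\Omega_2(w_1)$ given by \eqref{eq:omega_i}. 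Hence the follower's best response is $w_2^*(w_1) = \Omega_2(w_1)$ when this point lies in $\Delta \setminus \delta$, and is pinned to $\delta$ otherwise.

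Next, I substitute $w_2^*(w_1)$ into the leader's outer objective $\mathcal{R}_1^*(w_1, w_2^*(w_1))$. Two regimes arise. On the \emph{interior} regime, where $\Omega_2(w_1) \in \Delta \setminus \delta$, the composed objective is again a strictly concave quadratic in $w_1$ whose first-order condition yields the critical point $\widetilde w_1$ of \eqref{eq:witilde_i}; this produces the candidate $\widetilde{\mathbf{w}}^B = (\widetilde w_1, \Omega_2(\widetilde w_1))$. On the \emph{boundary} regime, where $w_2^*(w_1) \in \delta$, the MVNO's retail price is pinned at $p_0 = p_2$, so $Q_{1,0}$ and $Q_0$ are constants; relation \eqref{eq:Ri_Fi_Fi} therefore becomes linear and strictly increasing in $w_1$, and the leader pushes $w_1$ up to the switch between the two regimes, which by construction is precisely the point $\mathbf{w}^B$ of Definition~\ref{def:wAwB} (it satisfies both $\Omega_2(w_1^B) = w_2^B$ and $(w_1^B, w_2^B) \in \delta$).

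The final step is to decide which regime is active. By continuity of $\mathcal{R}_1^*(w_1, w_2^*(w_1))$ at the switch and concavity of the interior quadratic, the global maximum is attained at $\widetilde{\mathbf{w}}^B$ when $\widetilde{\mathbf{w}}^B \in \Delta \setminus \delta$, and at $\mathbf{w}^B$ otherwise. I would then plug the closed forms of $\widetilde w_1$ and $\Omega_2(\widetilde w_1)$ into the membership condition $\widetilde p_0(\widetilde{\mathbf{w}}^B) \leqslant p_2$; the main obstacle is purely the bookkeeping, since after cancellations this inequality must collapse to the single linear condition $r_0 > \overline{r}_0$ with $\overline{r}_0$ as in the statement. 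Granting this reduction, the proposition follows: if $r_0 \leqslant \overline{r}_0$, the interior candidate is infeasible and the optimum is $\mathbf{w}^B \in \delta$ with $p_0^* = p_2$; if $r_0 > \overline{r}_0$, the interior candidate is feasible and optimal, giving $\widetilde{\mathbf{w}}^B$ with $p_0^* = \widetilde p_0(\widetilde{\mathbf{w}}^B) < p_2$. Interchanging $1$ and $2$ throughout the argument handles the case where MNO~2 is leader and yields $\mathbf{w}^A$ and $\widetilde{\mathbf{w}}^A$, respectively.
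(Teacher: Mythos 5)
Your overall architecture --- backward induction on the bilevel problem \eqref{pb:FS}, the follower's reaction curve $w_2=\Omega_2(w_1)$ from \eqref{eq:omega_i}, the leader's strictly concave quadratic along that curve with critical point $\widetilde w_1$ of \eqref{eq:witilde_i}, and the threshold on $r_0$ deciding whether the interior candidate $\widetilde{\mathbf{w}}^B$ lies in $\Delta\setminus\delta$ --- is essentially the paper's proof, which phrases the follower's step through KKT conditions rather than a best-response map. The reduction you defer to ``bookkeeping'' is exactly the paper's computation $\widetilde w_i \geqslant \overline w_i \iff r_0 \leqslant \overline r_0$, so that part is acceptable.

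The genuine gap is in your boundary regime. The set of leader prices for which the follower's unconstrained optimum $\Omega_2(w_1)$ leaves $\Delta$ is $w_1 > \overline w_1 = w_1^B$ (since $\widetilde p_0(w_1,\Omega_2(w_1))$ is increasing in $w_1$), i.e.\ it lies to the \emph{right} of the switch point $\mathbf{w}^B$. On that regime the follower is pinned to $\delta$, the MVNO price is $p_2$, $Q_0$ is constant, and --- as you correctly observe --- the leader's profit \eqref{eq:Ri_Fi_Fi} is linear and \emph{strictly increasing} in $w_1$. An increasing objective pushes $w_1$ \emph{away} from the switch point, not ``up to'' it: the supremum over the boundary regime is attained at the far corner of $\delta$ where $w_2=0$, and the leader's profit there strictly exceeds its value at $\mathbf{w}^B$. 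Your final ``continuity plus concavity'' step is therefore inconsistent with your own monotonicity claim: the composed objective $w_1\mapsto \mathcal{R}_1^*(w_1,w_2^*(w_1))$ is not unimodal but is a concave parabola on $[0,\overline w_1]$ followed by an increasing line on $(\overline w_1, w_1^0]$. To recover the stated optimum you need an additional argument discarding that corner; the paper supplies it in its case (II), computing the derivative $\pi_i(1-\gamma)Q_0>0$ of the leader's profit along $\delta$ and then rejecting the resulting follower price $w_{-i}^*=0$ as economically irrelevant. Without that exclusion (or an equivalent restriction on admissible wholesale prices), the conclusion of Proposition~\ref{prop_FS} does not follow from your argument.
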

\noindent
We defer the detailed proof to Appendix \ref{seq:Prop2}.  
\vspace{0.1in}
\subsubsection{\textbf{Partially Sequential Model}} Now assume that MNOs simultaneously choose their optimal wholesale prices. Given (\ref{eq:RiiFiFi}), the joint optimization problem for both MNOs is thus expressed by
\begin{eqnarray}\label{max_11}
&&\underset{ w_1 \geqslant 0}{\mbox{ max }} {{\cal{R}}_{1}^*}(w_1,w_2)
\end{eqnarray}
and
\begin{eqnarray}\label{max_22}
&&\underset{ w_2 \geqslant 0}{\mbox{ max }} {{\cal{R}}_{2}^*}(w_1,w_2).
\end{eqnarray}
\begin{definition}\label{def:wC}
Let ${\bf{w}}^C=(w_1^C,w_2^C)$ denote the pair of prices given by
\begin{equation*}
\left \{
\begin{array}{l}
w_{1}^C = \displaystyle \frac{\displaystyle \frac{2h_1 \, Q_1}{p_1} - \frac{h_2 \, Q_2}{p_2}+(1-\gamma)(2c_1 \pi_1-c_2 \pi_2) \mathcal{S}+\mathcal{T}}{3(1-\gamma)\pi_1\mathcal{S}}, 
\nonumber \\ \\
w_{2}^C = \displaystyle \frac{\displaystyle \frac{2h_2 \, Q_2}{p_2} - \frac{h_1 \, Q_1}{p_1}+(1-\gamma)(2c_2 \pi_2-c_1 \pi_1) \mathcal{S}+\mathcal{T}}{3 (1-\gamma)\pi_2 \mathcal{S}}
\end{array}
\right.
\end{equation*} 
with $\mathcal{T} = Q+(r_0-\widetilde c_0) \mathcal{S}$. 
\end{definition}
\noindent
The pair ${\bf{w}}^C$ is given a geometric interpretation in Appendix \ref{seq:Prop3}. This enables us to state the following.

\begin{proposition}\label{prop_PS}
Define the constant 
$r_0^\flat$ by
$$
r_0^\flat=\frac{h_1Q_1}{p_1 \mathcal{S}}+\frac{h_2Q_2}{p_2\mathcal{S}}+(1-\gamma) (\pi_1 c_1+\pi_2 c_2)+\frac{5 \, Q}{\mathcal{S}}+\widetilde c_0-6 p_2.
$$
In the (Part-Part) scenario with PS model,
\begin{itemize}
\item[$\bullet$] if $r_0>r_0^\flat$, the optimal wholesale price vector $(w_1^*,w_2^*)$ is given by ${\bf{w}}^C\in\Delta\setminus\delta$. The optimal MVNO's retail price is then
$$
p_0^*(w_1^C,w_2^C)=\widetilde p_0(w_1^C,w_2^C)<p_2
$$
as defined by (\ref{p0tilde_fi_fi});
\item[$\bullet$] if $r_0 = r_0^\flat$, the optimal wholesale price vector $(w_1^*,w_2^*)$ is given by ${\bf{w}}^C\in\delta$. The optimal MVNO's retail price is then $p_0^*=p_2$;

\item[$\bullet$] if $r_0 < r_0^\flat$, the problem \eqref{max_11}-\eqref{max_22} admits no solution.
\end{itemize}
\end{proposition}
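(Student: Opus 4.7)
The plan is to mirror the derivation of Proposition~\ref{prop_one_fi} but now under the simultaneous-move setting: I look for a Nash equilibrium of \eqref{max_11}--\eqref{max_22} as the fixed point of the two MNOs' best responses. I would first restrict attention to the region $\Delta$, where Lemma~\ref{lem:p0star_fi_fi} gives $p_0^*(w_1,w_2)=\widetilde p_0(w_1,w_2)$. Substituting this into \eqref{eq:Ri_Fi_Fi} and noting that $Q_{i,0}$ and $Q_0$ then become affine in $w_i$, one checks that $\mathcal{R}_i^*(w_1,w_2)$ is a quadratic function of $w_i$ with second derivative $-\varepsilon(1-\gamma)^2\pi_i^2\mathcal S/2<0$, hence strictly concave in $w_i$ for every fixed $w_{-i}$. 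The first-order condition $\partial\mathcal{R}_i^*/\partial w_i=0$ therefore identifies MNO~$i$'s unique unconstrained best response, and by exactly the same algebra used to introduce $\Omega_{-i}$ in Definition~\ref{def:wAwB} this best response is $w_i=\Omega_i(w_{-i})$, i.e., the expression \eqref{eq:omega_i} after exchanging the roles of the indices $i$ and $-i$.

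A candidate interior NE must then satisfy $w_1=\Omega_1(w_2)$ and $w_2=\Omega_2(w_1)$ simultaneously. Since $\Omega_i$ is affine in $w_{-i}$ with slope $-\pi_{-i}/(2\pi_i)$, this is a $2\times 2$ linear system of determinant $1-1/4=3/4\neq 0$; direct substitution identifies its unique solution as the pair $\mathbf{w}^C$ of Definition~\ref{def:wC}. I would then plug $\mathbf{w}^C$ into \eqref{p0tilde_fi_fi} and consolidate using $\mathcal T=Q+(r_0-\widetilde c_0)\mathcal S$; the key identity I expect to obtain is
\begin{equation*}
\widetilde p_0(\mathbf{w}^C)\;=\;p_2\;+\;\frac{r_0^\flat - r_0}{6},
\end{equation*}
which locates $\mathbf{w}^C$ relative to $\Delta$ solely in terms of the threshold $r_0^\flat$.

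The three regimes of the proposition then follow. If $r_0>r_0^\flat$, then $\widetilde p_0(\mathbf{w}^C)<p_2$, so $\mathbf{w}^C\in\Delta\setminus\delta$; strict concavity in each variable makes $\mathbf{w}^C$ the unique NE, and Lemma~\ref{lem:p0star_fi_fi} returns $p_0^*(\mathbf{w}^C)=\widetilde p_0(\mathbf{w}^C)<p_2$. If $r_0=r_0^\flat$, the same reasoning gives $\mathbf{w}^C\in\delta$ and $p_0^*(\mathbf{w}^C)=p_2$. Finally, if $r_0<r_0^\flat$, the unique common critical point $\mathbf{w}^C$ falls outside $\Delta$, so no interior NE can exist; moreover outside $\Delta$ we have $p_0^*=p_2$, making each $\mathcal{R}_i^*$ a linear function of $w_i$ with strictly positive slope $(1-\gamma)\pi_i Q_0(p_2)$, so the unilateral maximization is unbounded. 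Exactly as in the exclusion of case~(a) in the proof of Proposition~\ref{prop_one_fi}, this regime is ruled out by the MVNO's participation, since it drives $\mathcal{R}_0$ to $-\infty$; hence \eqref{max_11}--\eqref{max_22} admits no solution. I expect the main technical obstacle to be the algebraic consolidation yielding the clean identity $\widetilde p_0(\mathbf{w}^C)-p_2=(r_0^\flat-r_0)/6$, since it is precisely this single formula that synchronizes the three regimes with the single threshold $r_0^\flat$.
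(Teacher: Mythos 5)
Your derivation of $\mathbf{w}^C$ as the unique fixed point of the two unconstrained best responses $w_i=\Omega_i(w_{-i})$ is equivalent to the paper's KKT case $\lambda_1=\lambda_2=0$, and your identity $\widetilde p_0(\mathbf{w}^C)=p_2+(r_0^\flat-r_0)/6$ checks out (the paper only states the equivalence $\widetilde p_0(\mathbf{w}^C)<p_2\iff r_0>r_0^\flat$ without this explicit form, so that is a nice consolidation). The first two bullets of Proposition~\ref{prop_PS} are therefore handled correctly, modulo a harmless slip: the second derivative $\partial^2\mathcal{R}_i^*/\partial w_i^2$ equals $-\varepsilon(1-\gamma)^2\pi_i^2\mathcal{S}$, not $-\varepsilon(1-\gamma)^2\pi_i^2\mathcal{S}/2$ (you appear to have imported the constant from the leader's reduced profit $\mathcal{R}_i^{**}$ in the FS model); the sign, hence concavity, is unaffected.

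The genuine gap is in the third bullet. Once the exterior of $\Delta$ is excluded on the grounds that it drives $\mathcal{R}_0$ to $-\infty$, each MNO's problem becomes a \emph{constrained} maximization over $\{w_i\geqslant 0:(w_1,w_2)\in\Delta\}$, i.e.\ over an interval $[0,b_i(w_{-i})]$ whose right endpoint lies on $\delta$. When $r_0<r_0^\flat$ the interior critical point $\mathbf{w}^C$ leaves $\Delta$, but by concavity each $\mathcal{R}_i^*(\cdot,w_{-i})$ is then \emph{increasing} on all of $[0,b_i(w_{-i})]$, so the constrained best responses sit on the boundary $\delta$ — and you never examine whether some point of $\delta$ could simultaneously solve \eqref{max_11} and \eqref{max_22}. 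This is precisely what the paper's KKT cases (II) and (III) in system \eqref{kktPSi} address: with $\lambda_1,\lambda_2>0$ one finds $\lambda_1+\lambda_2=\varepsilon\mathcal{S}(r_0^\flat-r_0)>0$, so a whole segment of boundary candidates (including $\mathbf{w}^A$ and $\mathbf{w}^B$ at its endpoints) does exist when $r_0<r_0^\flat$; non-existence of a joint solution is then obtained only by showing that the profits $\varphi_1(\lambda_1)$ and $\varphi_2(\lambda_1)$ are linear on $[0,\varepsilon\mathcal{S}(r_0^\flat-r_0)]$ with maxima at \emph{opposite} endpoints, so no boundary point maximizes both objectives at once. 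Your "no interior NE $+$ unbounded outside $\Delta$ $\Rightarrow$ no solution" skips this entire step, and since the boundary candidates are in fact nonempty, the conclusion does not follow from what you have written. To repair the argument you must restrict the two maximizations to $\Delta$, parametrize the active-constraint candidates on $\delta$, and reproduce the opposing-monotonicity argument along $\delta$.
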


\noindent 
The proof of Proposition \ref{prop_PS} is detailed in Appendix \ref{seq:Prop3}. 
\section{Game-Theoretic Model} \label{nash}
In this section, we propose a non-cooperative game-theoretical model to formalize the competition between MNO 1 and MNO 2. 
\begin{definition}\label{def:game}
Introduce the two-player game where
\begin{itemize}
\item[$\--$] the players are  MNO $1$ and MNO $2$,
\item[$\--$] the strategies are either to contract with the new entrant, "Part" strategy, or not to contract, "NonPart" strategy,
\item[$\--$] the payoffs are given by the matrix
\end{itemize}
\begin{center}
$
\bordermatrix{~     &  Part  &  NonPart \cr
       Part    & ({\cal{R}}_1^*(w_1^*,w_2^*),{\cal{R}}_2^*(w_1^*,w_2^*))& ({\cal{R}}_1^*(\widehat w_1),{\cal{R}}_2(\widehat w_1)) \cr
       NonPart &  ({\cal{R}}_1(\widehat w_2),{\cal{R}}_2^*(\widehat w_2))  &  ({\cal{R}}_1^{0}, {\cal{R}}_2^{0}) \cr}.
$
\end{center}
\end{definition}

\noindent 
In this matrix, the pair $({\cal{R}}_1^{0}, {\cal{R}}_2^{0})$ corresponds to the scenario when no MNO partners with the MVNO, thus forbidding its entrance into the market. 

We recall that a Nash Equilibrium (NE) is a strategy profile such that no player has an incentive to unilaterally deviate from this profile \cite{Nash}. Besides, recall from definition (\ref{defection}) that $Q_{i,0}$ denotes the customer part which defects from MNO $i$ to the new entrant in the scenario (Part, Part); furthermore, let $\widehat{Q}_{i,0}$ denote the customer part which defects from MNO $i$ to the MVNO in the scenario (NonPart, Part), that is, when only MNO $-i$ contracts a partnership with the new entrant.
\begin{lem}\label{lemma3}
In the FS model with $r_0\leqslant \overline{r}_0$, a MNO loses more customers when it is non-partner than when it is a partner of the MVNO, that is, $\widehat{Q}_{i,0} \geqslant Q_{i,0}$.
\end{lem}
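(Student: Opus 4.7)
My plan is to reduce the inequality $\widehat{Q}_{i,0}\geqslant Q_{i,0}$ to a comparison between two MVNO retail prices, using the fact that the defection formula \eqref{defection} depends only on $p_{0}$ (and on $p_{i}$, $Q_{i}$, which are exogenous). Since $Q_{i,0}=\varepsilon Q_{i}(p_{i}-p_{0})/p_{i}$ is a decreasing function of $p_{0}$, it is enough to establish that the MVNO's optimal retail price is no smaller in the (Part, Part) scenario than in the (NonPart, Part) scenario.

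First I would apply Proposition \ref{prop_FS}: under the hypothesis $r_{0}\leqslant\overline{r}_{0}$, the FS equilibrium wholesale prices $(w_{1}^{*},w_{2}^{*})$ (either $\mathbf{w}^{A}$ or $\mathbf{w}^{B}$ depending on who leads) lie on the boundary segment $\delta$, so that the optimal MVNO retail price satisfies
\begin{equation*}
p_{0}^{*}(w_{1}^{*},w_{2}^{*}) \;=\; p_{2}.
\end{equation*}
Consequently, in the (Part, Part) scenario,
\begin{equation*}
Q_{i,0} \;=\; \varepsilon\,Q_{i}\,\frac{p_{i}-p_{2}}{p_{i}}.
\end{equation*}

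Next I would invoke Proposition \ref{prop_one_fi} (or directly Lemma \ref{lem:p0star}) applied with $-i$ playing the role of the partnering operator: in the (NonPart, Part) scenario where only MNO $-i$ contracts with the entrant, the optimal MVNO retail price is
\begin{equation*}
\widehat{p}_{0} \;=\; \min\bigl(\widetilde{p}_{0}(\widehat{w}_{-i}),\, p_{2}\bigr) \;\leqslant\; p_{2}.
\end{equation*}
Therefore
\begin{equation*}
\widehat{Q}_{i,0} \;=\; \varepsilon\,Q_{i}\,\frac{p_{i}-\widehat{p}_{0}}{p_{i}} \;\geqslant\; \varepsilon\,Q_{i}\,\frac{p_{i}-p_{2}}{p_{i}} \;=\; Q_{i,0},
\end{equation*}
which is the claim.

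There is really no obstacle here beyond correctly identifying which previous results pin down the two retail prices being compared; the only subtle point is noticing that the condition $r_{0}\leqslant\overline{r}_{0}$ is precisely what forces the (Part, Part) optimum onto the saturation boundary $p_{0}^{*}=p_{2}$, which is the largest price the MVNO is ever allowed to charge, making the inequality $\widehat{p}_{0}\leqslant p_{0}^{*}$ automatic.
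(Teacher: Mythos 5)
Your argument is correct and is essentially identical to the paper's own proof: both reduce the claim to the price comparison $\widehat{p}_0(\widehat{w}_{-i}) \leqslant p_2 = p_0^*(w_1^*,w_2^*)$, with the first inequality from Proposition \ref{prop_one_fi} (via the $\min(\cdot,p_2)$ form of the optimal retail price) and the equality from Proposition \ref{prop_FS} under $r_0\leqslant\overline{r}_0$, then conclude via the monotonicity of \eqref{defection} in $p_0$. No gaps.
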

\begin{proof}
From Proposition \ref{prop_FS} with $r_0\leqslant \overline{r}_0$, we have $p_0^*(w_1^*,w_2^*)=p_2$ while Proposition \ref{prop_one_fi} entails $\widehat p_0(\widehat{w}_{-i}) \leqslant p_2$. Using definition \eqref{defection}, we then deduce that the difference 
$Q_{i,0}-\widehat{Q}_{i,0} = \frac{\varepsilon Q_i}{p_i} \left ( \widehat{p}_0(\widehat{w}_{-i})- p_0^*(w_1^*,w_2^*) \right )$ is non-positive, as claimed.
\end{proof}
As a consequence, we can formulate the subsequent result on the existence of a Nash equilibrium.
\begin{proposition}\label{prop:nash}
In the FS model with $r_0\leqslant \overline{r}_0$ and for wholesale prices higher than network costs, 

(a) the scenario (Part, Part) is a NE; 

(b) if $r_0 \leqslant \min(\overline{r}_0,\overline{r}_{2,0})$, (Part, Part) is the unique NE.
\end{proposition}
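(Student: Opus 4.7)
The plan is to verify the NE conditions for each of the four strategy profiles of the payoff matrix in Definition~\ref{def:game} using the profit decomposition \eqref{eq:Ri_Fi_Fi} together with Lemma~\ref{lemma3}. For part (a), I would fix $i\in\{1,2\}$ and subtract MNO $i$'s non-partner payoff $h_i(Q_i-\widehat Q_{i,0})-\overline C_i$ from its partner payoff ${\cal R}_i^*(w_1^*,w_2^*)$ given by \eqref{eq:Ri_Fi_Fi}; the difference factors cleanly as $h_i(\widehat Q_{i,0}-Q_{i,0})+(w_i^*-c_i)(1-\gamma)\pi_i Q_0$. Under $r_0\leqslant\overline r_0$, Lemma~\ref{lemma3} makes the first summand non-negative, while the wholesale-margin hypothesis $w_i^*\geqslant c_i$ makes the second summand non-negative. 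Hence no unilateral deviation from (Part, Part) improves either MNO's profit, which proves (a).

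For part (b), the off-diagonal profiles (Part, NonPart) and (NonPart, Part) are eliminated by sharpening the inequality of part (a): under the strict hypothesis $w_i^*>c_i$ together with $Q_0>0$ (which holds because $p_0^*(w_1^*,w_2^*)=p_2<p_1$), the wholesale-margin term becomes strictly positive, so the non-partner MNO in each of these profiles strictly prefers to switch to Part, and neither profile is an equilibrium. This step uses only $r_0\leqslant\overline r_0$ and the wholesale-margin hypothesis, not the sharper condition $r_0\leqslant\overline r_{2,0}$.

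The final case is (NonPart, NonPart), where the previous comparison breaks down because the MVNO is absent from the market and no Lemma~\ref{lemma3} bound is available. Here the additional assumption $r_0\leqslant\overline r_{2,0}$ becomes essential: by Proposition~\ref{prop_one_fi} it yields $\widehat w_2=\overline w_2$ and $\widehat p_0(\widehat w_2)=p_2$, so that when MNO 2 unilaterally becomes the sole partner it loses \emph{no} customers ($\widehat Q_{2,0}=\varepsilon Q_2(p_2-p_2)/p_2=0$) while still collecting wholesale revenue on the defectors from MNO 1. The profit gain reduces to $(\overline w_2-c_2)(1-\gamma)\,\varepsilon Q_1(p_1-p_2)/p_1$, strictly positive under the wholesale-margin hypothesis. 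Hence MNO 2 strictly prefers to deviate and (NonPart, NonPart) is ruled out. The hard part lies precisely in this last step, since it is the only one that does not reduce to Lemma~\ref{lemma3}; it also explains why the uniqueness condition involves $\overline r_{2,0}$ rather than $\overline r_{1,0}$, the threshold being tuned so that the sole-partner scenario leaves MNO 2's own customer base untouched and makes the partnering deviation costless on the retail side.
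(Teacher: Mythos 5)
Your proof is correct and follows essentially the same route as the paper's: the same profit-difference decomposition combined with Lemma~\ref{lemma3} and the wholesale-margin condition $w_i^*\geqslant c_i$ for part (a), and the observation that $r_0\leqslant\overline r_{2,0}$ forces $\widehat p_0(\widehat w_2)=p_2$, hence $\widehat Q_{2,0}=0$, so that MNO 2 strictly gains by deviating from (NonPart, NonPart), for part (b). If anything you are slightly more careful than the paper, which treats only the (NonPart, NonPart) deviation explicitly and leaves the elimination of the off-diagonal profiles, and the strictness of the inequalities needed for uniqueness, implicit.
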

\begin{proof}
\textit{(a) Scenario (Part, Part) is a NE}. Suppose that both operators partner with the MVNO, that is, the scenario is (Part, Part); consider MNO $i$ and let ${\cal{R}}_i^*(w_1^*,w_2^*)$ denote its profit for this scenario. Assume now that MNO $i$ unilaterally switches from strategy "Part" to strategy "NonPart"; we denote by ${\cal{R}}_i(\widehat w_{-i})$ the profit of MNO $i$ when it applies strategy "NonPart" while MNO $-i$ keeps strategy "Part". We then have
\begin{eqnarray*}
{\cal{R}}_i^*(w_1^*,w_2^*)=h_i(Q_i-Q_{i,0})+(w_i^*-c_i)(1-\gamma)\pi_iQ_0-\overline{C}_i
\end{eqnarray*}
and ${\cal{R}}_i(\widehat w_{-i})=h_i(Q_i-\widehat{Q}_{i,0})-\overline{C}_i$, so that 
\begin{align}
{\cal{R}}^*_i(w_1^*,w_2^*)-{\cal{R}}_i(\widehat w_{-i}) & \, = h_i({\widehat{Q}}_{i,0}-Q_{i,0}) + (w_i^*-c_i) \times 
\nonumber \\
& \, (1-\gamma) \pi_i Q_0.
\label{ProfitDiff}
\end{align}
As $\widehat{Q}_{i,0} \geqslant Q_{i,0}$ by Lemma \ref{lemma3}, the profit difference ${\cal{R}}^*_i(w_1^*,w_2^*)-{\cal{R}}_i(\widehat w_{-i})$ in (\ref{ProfitDiff}) is then non-negative after assumptions $h_i \geqslant 0$ and $w_i^* - c_i \geqslant 0$; MNO $i$ then has no incentive to unilaterally deviate from strategy "Part" since this yields a profit decrease. (Part, Part) is thus a NE.

\textit{(b) (Part, Part) is the unique NE}. Assume both MNOs use strategy "Non Part". Now suppose MNO 2 unilaterally switches to "Part"; we first have ${\cal{R}}_2^{0}=h_2Q_2-\overline{C}_2$ while ${\cal{R}}^{*}_{2}(\widehat{w}_2)=h_2(Q_2-\widehat{Q}_{2,0}) + (\widehat{w}_2 - c_2)(1 - \gamma)Q_{0} - \overline{C}_2$ after (\ref{eq:Ri}), hence
\begin{equation}
{\cal{R}}^*_2(\widehat{w}_2) - {\cal{R}}_2^{0} = -h_2\widehat{Q}_{2,0} + (\widehat{w}_2 - c_2)(1 - \gamma)Q_{0}. 
\label{ProfitDiffBIS}
\end{equation}
From Proposition \ref{prop_one_fi}, we have $\widehat{Q}_{2,0}$ = 0 if $r_0 \leqslant \overline{r}_{2,0}$; the difference ${\cal{R}}^*_2(\widehat{w}_2) - {\cal{R}}_2^{0}$ in (\ref{ProfitDiffBIS}) is then  non-negative after assumption $\widehat{w}_2 - c_2 \geqslant 0$ and (Non Part, Non Part) is not a NE. We conclude that if $r_0 \leqslant\overline{r}_0$ and $r_0 \leqslant \overline{r}_{2,0}$, (Part,Part) is the only NE.
\end{proof}

Note that a preliminary study has given us hints for (Part,Part) to be still a NE for 
$r_0 > \overline{r}_0$. For the PS model, a similar analysis should also be addressed for 
$r_0 \leqslant r_0^{\flat}$. 
\section{Economic Discussion} 
\label{numerical}
The results obtained in the previous sections allow us to provide the following comments.

{\bf {Scenario (Part-Part): }} When both MNOs partner with the MVNO, Propositions \ref{prop_FS} and \ref{prop_PS} entail that, when $\gamma \rightarrow $ 1, all optimal wholesale prices are of order $1/(1 - \gamma)$; this increasing rate again confirms the effect of economy of scale. Now, regarding the MVNO's optimal retail price, we distinguish two cases depending on the order in which decisions are taken by MNOs. First, we have shown (Proposition \ref{prop_FS}) for the FS model that the entrant MVNO charges its users the same price $p_2$ as the lowest MNO if $r_{0} \leqslant \overline{r}_0$, thus attracting users only from the MNO with the highest retail price; otherwise, it attracts users from both MNOs. Second, we have shown (Proposition \ref{prop_PS}) for the PS model that the entrant MVNO cannot set an optimal retail price strictly lower than both MNOs' prices unless it has sufficiently high indirect revenues, that is, $r_0 > r_0^\flat$. Otherwise, if $r_0 < r_0^\flat$, there is no wholesale prices that jointly optimize both MNOs profits.    

For each class of actors, we can conclude the following: 

{\bf{For the MVNO:}} for all scenarios, the MVNO retail price $p_0^*$ decreases with $r_0$; the MVNO can thus set a retail price strictly lower than that of the MNOs if it has high enough indirect revenues. Besides, the threshold value $\overline r_0$ (resp. $r_0 ^\flat$) in model FS (resp. model PS) is a decreasing function of the proportion $\gamma$, so that the MVNO has an optimal retail price strictly lower than that of both MNOs for large enough $\gamma$. The technological independence of the MVNO from its partner MNOs due to free WiFi access thus translates into an economic advantage on the retail market (but this does not obviously account for better QoS and security levels offered to its customers if a larger part of MVNO traffic were transferred through optimized cellular networks);

{\bf{For the MNOs:}} for the FS model and under sufficient conditions on the indirect revenue of the MVNO, the scenario (Part, Part) defines the unique NE. This means that the MNOs have then an incentive to partner with the new entrant: in fact, the MNOs would in the first place prefer that the MVNO does not enter the market in order to keep their customer base, but each MNO fears that its competitor hosts the MVNO, in which case it would incur losses both on the retail and wholesale markets. As a consequence, both MNOs will eventually decide to partner with the MVNO. In addition, a non-partner MNO would incur higher retail losses if it were non-partner than when it is a partner of the MVNO. Cooperating with the MVNO therefore enables each MNO to compensate for a part of its retail revenue losses.
\section{Conclusion} 
\label{conc}
In this paper, our contribution is twofold. First, we address the \textit{price setting optimization  problems} for both MNOs and the entrant MVNO in the framework of two distinct scenarios. In this aim, we propose several mathematical programming formulations for the underlying problems, each one corresponding to a specific decision-making scheme; we also discuss the economic interpretation of the optimal solution in each case. Secondly, based on the optimal price setting step, we provide a \textit{game-theoretical analysis} of the MNOs competition and show that (Part, Part) is the unique Nash Equilibrium, and thus the most profitable scenario for both MNOs, provided that appropriate conditions on the MVNO's indirect revenue are fulfilled. 

The particular case of only two competing MNOs has provided us with interesting results, and a natural extension to this work would be to generalize the results obtained for two MNOs to an arbitrary number ($n \geqslant$ 3) of MNOs. Indeed, the two-dimensional optimization problems that we have here addressed may exhibit other features in the $n$-dimensional case (e.g. several possible optimal points). Furthermore, the associated $n$-player games could be amenable to cooperation schemes among players which could be interestingly studied. On the other hand, other demand models differing from that considered in this paper (customer defection due to the price/demand elasticity) could also be envisaged; alternative models based on interactions between users or on the MVNO brand appeal could capture other preference sources of users towards each actor.   


\section*{Acknowledgment}
The authors thank V. Chandrakumar, L. Le Beller and M. Touati at Orange Labs for fruitful discussions, together with the anonymous referees for their valuable comments. 

\bibliographystyle{IEEEtran}

\section{Appendix}
\subsection{\textbf{Proof of Proposition \ref{prop_FS}.}}
\label{seq:Prop2}
Recall that MNO $i \in \{1,2\}$ (resp. MNO $-i$) is assumed to be the leader (resp. the follower). We successively consider the two cases $(a)$ $(w_1, w_2)\notin\Delta$ and $(b)$ $(w_1, w_2)\in\Delta$.

First consider case $(a)$. In view of \eqref{p0star_fi_fi}, this corresponds to values of $(w_1,w_2)$ such that $p_0^*(w_1,w_2)=p_{2}$ and by \eqref{eq:RiiFiFi}, we easily show that ${\cal{R}}_{-i}^*(w_1, w_2) = {\cal{R}}_{-i}(p_2,w_{-i})$ is a linear and increasing function of $w_{-i}$. The optimal value of $\mathcal{R}_{-i}^*$ is thus obtained at $w_{-i} = +\infty$; as this unbounded price gives the value $-\infty$ for $\mathcal{R}_0$, case $(a)$ can thus be excluded.

Now consider case $(b)$. By \eqref{p0star_fi_fi}, this corresponds to pairs $(w_1,w_2)$ such that 
$p_0^*(w_1,w_2)=\widetilde p_{0}(w_1, w_2)$. To solve the optimization problem for follower MNO $-i$ in (\ref{pb:FS}) with given $w_i$, we introduce the associated Lagrange function given by
\begin{eqnarray*}
{\cal{L}}_{-i}(w_1,w_2,\lambda_{-i})={\cal{R}}_{-i}^*(w_1,w_2)-\lambda_{-i}\left (\widetilde p_0(w_1,w_{2})- p_2 \right )
\end{eqnarray*}
where $\lambda_{-i}$ is the Lagrange multiplier associated to the constraint $\widetilde p_0(w_1,w_2) \leqslant p_2$; from definition \eqref{eq:RiiFiFi} and the expression \eqref{p0tilde_fi_fi} of 
$\widetilde p_0(w_1,w_2)$, ${\cal{R}}_{-i}^*(w_1,w_2)$ is easily expressed as a quadratic function of variable $w_{-i}$. The system of Karush-Kuhn-Tucker (KKT) (\cite{nonlinear_optimization}, Chap.4, Sec. 4.2.13) conditions for the Lagrangian ${\cal{L}}_{-i}$ above can be written as
\begin{equation}\label{kkt}
\left \{
\begin{array}{l}
\displaystyle \frac{\partial {\cal{L}}_{-i}}{\partial w_{-i}}(w_1, w_2,\lambda_{-i})=0, \\ \\
\lambda_{-i} \geqslant 0, \qquad \lambda_{-i} \big( \widetilde p_0(w_1,w_2)-p_2 \big)=0, \\ \\
\widetilde p_0(w_1,w_2) \leqslant p_2.
\end{array}
\right.
\end{equation}
Two cases can intervene for the multiplier $\lambda_{-i}$:

(I) if $\lambda_{-i} =0$, the first KKT condition in system \eqref{kkt} reads 
\begin{equation}\label{DADB}
\frac{\partial \mathcal{R}_{-i}^*}{\partial w_{-i}}(w_1,w_2)=0;  
\end{equation}
function $\mathcal{R}^*_{-i}$ being quadratic, equation \eqref{DADB} is linear in both variables $w_1$, 
$w_2$ and thus defines geometrically a line $D$ (displayed in Fig. \ref{fig:delta} as line $D^B$ if $-i = 2$ or line $D^A$ if $-i = 1$). Solving \eqref{DADB} for $w_{-i}$ then yields the unique maximum at point 
$w_{-i}^*=\Omega_{-i}(w_{i})$ with function $\Omega_{-i}$ defined as in \eqref{eq:omega_i}.

Now, consider the profit maximization in problem \eqref{pb:FS} for the leader MNO $i$. In order to solve it, we replace $w_{-i}$ in the profit $\mathcal{R}^*_i(w_1,w_2)$ by the maximum $w_{-i}^*=\Omega_{-i}(w_i)$ derived above. Given \eqref{eq:RiiFiFi}, we thus define
\begin{eqnarray*}
{\mathcal{R}}_{i}^{**}(w_{i})=\mathcal{R}_i^*( w_{1},w_2)_{\vert w_{-i} = \Omega_{-i}(w_i)}.
\end{eqnarray*}
Recall that we consider case $(b)$ for which $(w_1,w_2)\in\Delta$ and $p_0^*(w_1,w_2)=\widetilde p_0(w_1,w_2)$, so that the optimization problem for leader MNO $i$ eventually reads  
\begin{equation}
\left\{
\begin{array}{ll}
\underset{ w_i \geqslant  0 }{ \max~  } {\cal{R}}_{i}^{**}(w_i), 
\\ \\
\mbox{subject to } \quad \widetilde p_0(w_1,w_2)_{\vert w_{-i} = \Omega_{-i}(w_i)}\leqslant p_2.
\end{array} \right.
\label{u2222}
\end{equation}
First, the constraint $\widetilde p_0(w_1,w_2)_{\vert w_{-i} = \Omega_{-i}(w_i)}\leqslant p_2$ in \eqref{u2222} is easily translated into $w_{i}\leqslant \overline{w}_{i}$ where we set
\begin{eqnarray*}
\overline{w}_{i}=
\frac{\displaystyle -\frac{h_{-i}Q_{-i}}{p_{-i}}-c_{-i} (1-\gamma)\pi_{-i}  \mathcal{S} - 
4Q + 4 p_2 \mathcal{S} + \mathcal{T}}{(1-\gamma)\pi_{i} \mathcal{S}}
\end{eqnarray*}
with $\mathcal{T} = Q + (r_0 - \widetilde{c}_0)\mathcal{S}$. Second, the 1st order condition $\partial \mathcal{R}_{i}^{**}(w_i)/\partial w_i =0$ for problem 
\eqref{u2222} yields the critical point $\widetilde w_i$, given as in \eqref{eq:witilde_i}. The second derivative $\partial^2 {\cal{R}}_{i}^{**}(w_i)/\partial w_i^2$ being equal to the negative constant 
$-\varepsilon  (1-\gamma)^2 \pi_i^2 \mathcal{S}/2$, $\mathcal{R}_{i}^{**}(w_i)$ is therefore a strictly concave function of $w_i$ and has a unique maximum on $\mathbb{R}^+$ at $\widetilde w_{i}$. 
It thus follows from the latter discussion that
$$
w_i^*=\min (\widetilde w_{i},\overline{w}_{i})
$$
is the unique solution to problem \eqref{u2222}. 
Now, we easily verify that $\widetilde w_{i} \geqslant \overline{w}_{i} \iff $  
$r_0 \leqslant \overline{r}_{0}$ where $\overline{r}_{0}$ is expressed in Proposition \ref{prop_FS}. We thus conclude that if $r_0 \leqslant \overline{r}_{0}$, the optimal solution is the intersection point 
$\textbf{w}^*$ = $(\overline{w}_{1}, \Omega_{2}(\overline{w}_1))$ = $\textbf{w}^B \in D^B \cap \delta$ 
when $i=1$, or $\textbf{w}^* = (\Omega_{1}(\overline{w}_2), \overline{w}_{2})$ = $\textbf{w}^A \in D^A \cap \delta$ when  
$i=2$; otherwise, if $r_0 > \overline{r}_{0}$, the optimal solution is the intersection point 
$\textbf{w}^*$ = $(\widetilde{w}_{1}, \Omega_{2}(\widetilde{w}_1)) = \widetilde{\textbf{w}}^B \in D^B \cap \Delta$ when $i=1$, or $\textbf{w}^*$ = $(\Omega_{1}(\widetilde{w}_2),\widetilde{w}_{2}) = \widetilde{\textbf{w}}^A \in D^A \cap \Delta$ when $i=2$;.

(II) otherwise, if $\lambda_{-i} > 0$, the first KKT condition 
$\partial {\cal{L}}_{-i}(w_1,w_2,\lambda_{-i})/\partial w_{-i}=0$ in system (\ref{kkt}) yields
\begin{eqnarray}\label{eq_ww}
w_{-i}=\bar{\omega}_{-i} (\lambda_{-i},w_{i}),
\end{eqnarray}
with $\overline{\omega}_{-i} (\lambda_{-i}, w_{i})=A_i(w_i)/\varepsilon \mathcal{S}  (1-\gamma) \pi_{-i}$
where we set 
\begin{align}
A_i(w_i) = & \; \varepsilon \frac{h_{-i} Q_{-i}}{2p_{-i}}+\varepsilon \frac{Q}{2}-\varepsilon \mathcal{S}w_{i}\frac{1-\gamma}{2}\pi_{i}-\varepsilon \mathcal{S} \frac{\tilde c_0-r_0}{2} \; + 
\nonumber \\ 
& \; \varepsilon c_{-i} (1-\gamma) \pi_{-i} \frac{\mathcal{S}}{2}-\frac{\lambda_{-i}}{2}.
\nonumber
\end{align}
The complementary slackness condition $\widetilde p_0(w_{1},w_{2})=p_2$ then eventually reduces to
$\lambda_{-i}=\Lambda_{-i}(w_{i})$ where
\begin{align}
\Lambda_{-i}(w_{i}) = & \; \varepsilon \mathcal{S}  (1-\gamma) \pi_{i}  w_{i} + 
\varepsilon \frac{h_{-i}Q_{-i}}{p_{-i}} + 3 \varepsilon Q \; + 
\nonumber \\ 
& \; \varepsilon \mathcal{S}(\tilde c_0-r_0) + \varepsilon \mathcal{S} c_{-i} (1-\gamma)  \pi_{-i} - 
4 \varepsilon \mathcal{S} p_2.
\nonumber
\end{align}
By inserting this expression of $\lambda_{-i} = \Lambda_{-i}(w_{i})$ into the right-hand side of (\ref{eq_ww}), we finally get $w_{-i}=\overline{\Omega}_{-i}(w_{i})$ where
$$
\overline{\Omega}_{-i}(w_{i})=\frac{-(1-\gamma) \pi_{i}  \mathcal{S}w_{i} -Q-\mathcal{S}(\tilde c_0-r_0)+2\mathcal{S} p_2}{\mathcal{S} (1-\gamma) \pi_{-i} }.
$$
Now, consider the profit of leader MNO $i$ given by
\begin{eqnarray*}
\mathcal{R}_{i}^{**}(w_{i})=\mathcal{R}_i^*(w_1,w_2)_{\vert w_{-i} = \overline{\Omega}_{-i}(w_i)}.
\end{eqnarray*}
The first derivative of ${\cal{R}}_{i}^{**}(w_i)$ with respect to $w_{i}$ equals
$d {\cal{R}}_i^{**}(w_{i})/d w_{i}=\pi_{i} (1-\gamma) Q_0$ and is strictly positive; but in the present case, $w_i\in [0, w_i^0]$ where $w_i^{0}$ is either the abcissa or the ordinate of the intersection point of line $\delta$ with the axis $w_{-i}$ = 0. The corresponding optimal unit price $w_{-i}^{*}$ for the follower MNO $-i$ is therefore 0; as having no economic relevance for this partner MNO, this case (II) is eventually excluded.
\subsection{\textbf{Proof of Proposition \ref{prop_PS}.}}
\label{seq:Prop3}
We successively consider the two cases $(a)$ $(w_1, w_2) \notin \Delta$ and $(b)$ $(w_1, w_2) 
\in \Delta$.

First consider case $(a)$. In view of \eqref{p0star_fi_fi}, this corresponds to pairs $(w_1,w_2)$ such that $p_0^*(w_1,w_2)=p_{2}$ and by \eqref{eq:RiiFiFi}, we easily show that $\mathcal{R}_i^*(w_1, w_2) = {\mathcal{R}}_{i}(p_2,w_{i})$ is a linear and increasing function of $w_{i}$, $i \in \{1, 2\}$. The optimal value of $\mathcal{R}_i^*$ is thus obtained at 
$w_{i} = +\infty$; as this unbounded price gives the value $-\infty$ for $\mathcal{R}_0$, case $(a)$ can therefore be excluded.

Now consider case $(b)$. In view of \eqref{p0star_fi_fi}, this corresponds to values of $(w_1,w_2)$ such that $p_0^*(w_1,w_2)=\widetilde p_{0}(w_1, w_2)$. From definition \eqref{eq:RiiFiFi} and the expression \eqref{p0tilde_fi_fi} of $\widetilde p_0(w_1,w_2)$, each function ${\cal{R}}_1^*(w_1,w_2)$ and ${\cal{R}}_2^*(w_1,w_2)$ is again expressed as a quadratic function of variable $w_1$ and $w_2$, respectively. The KKT conditions associated with optimization problem (\ref{max_11}) for MNO $1$, and optimization problem (\ref{max_22}) for MNO $2$, read
\begin{equation}\label{kktPSi}
\left \{
\begin{array}{l}
\displaystyle \frac{\partial {\cal{R}}_{i}^*}{\partial w_i}(w_1,w_2)=\lambda_i \frac{ (1-\gamma) \pi_i}{2}, \\ \\
\lambda_i \geqslant 0, \qquad \lambda_i (\widetilde p_0(w_1,w_2)-p_2)=0,
\\ \\
\widetilde p_0(w_1,w_2) \leqslant p_2;
\end{array}
\right.
\end{equation}
for $i = 1$ and $i = 2$, respectively. 
At this point, we need to distinguish three cases according to the values of the Lagrange multipliers (I) $\lambda_1=0$ and $\lambda_2=0$, (II) $\lambda_1>0$ and $\lambda_2>0$, (III) $\lambda_1=0$ and $\lambda_2>0$ (or reversely $\lambda_1 > 0$ and $\lambda_2 = 0$).

(I) First assume $\lambda_1=0$ and $\lambda_2=0$. The simultaneous conditions
$$
\frac{\partial {\cal{R}}_{1}^*}{\partial w_1}(w_1,w_2)=0, \quad 
\frac{\partial {\cal{R}}_{2}^*}{\partial w_2}(w_1,w_2)=0
$$
yield the critical pair ${\bf{w}}^C \in D^A\cap D^B$, intersection point of lines $D^A$ and $D^B$ defined in \eqref{DADB}. For each $i \in \{1,2\}$, the second derivative $\partial^2\mathcal{R}^*_i(w_1,w_2)/\partial w_i^2$ equals the negative constant $-\varepsilon (1-\gamma)^2 \pi_i^2 \mathcal{S}$; $\mathcal{R}^*_i$ is thus a strictly concave function of the variable $w_i$ and the point $\mathbf{w}^C$ is therefore the unique joint maximum for $\mathcal{R}^*_1$ and $\mathcal{R}^*_2$. Besides, it is readily shown that $\widetilde p_0(w_1^C,w_2^C) < p_2$ if and only if $r_0 > r_0^\flat$; in such a case, we then have 
${\bf{w}}^C \in \Delta \setminus \delta$ and this point ${\bf{w}}^C$ is the optimal pair of wholesale prices.

(II) Now assume $\lambda_1>0$ and $\lambda_2>0$. Solving each KKT system (\ref{kktPSi}) for $i = 1$ and $i = 2$, we get
\begin{equation*}
\left \{
\begin{array}{l}
w_1(\lambda_1)= \displaystyle \frac{\displaystyle \frac{h_1 Q_1}{p_1}+ c_1 (1-\gamma) \pi_1 \mathcal{S}+2 Q-2  \mathcal{S} p_2-\frac{\lambda_1}{\varepsilon}}{  (1-\gamma) \pi_1\mathcal{S}},\\
w_2(\lambda_1)= \displaystyle \frac{\displaystyle \frac{h_2 Q_2}{p_2}+  c_2 (1-\gamma) \pi_2 \mathcal{S}+2 Q-2\mathcal{S} p_2-\frac{\lambda_2}{\varepsilon }}{(1-\gamma)\pi_2\mathcal{S}}, \\
\lambda_1+\lambda_2=\varepsilon \mathcal{S} (r_0^\flat-r_0).
\end{array}
\right.
\end{equation*}
We have $\lambda_1+\lambda_2>0$ if and only if $r_0<r_0^\flat$. Calculating the respective values of $\varphi_{1}(\lambda_1)$ = $\mathcal{R}_{1}^*(w_1(\lambda_1), w_2(\lambda_1))$ and $\varphi_{2}(\lambda_1)$ = $\mathcal{R}_{2}^*(w_1(\lambda_1), w_2(\lambda_1))$ easily shows that $\varphi_{1}$ and $\varphi_{2}$ are linear functions over interval $[0, \varepsilon\mathcal{S}(r_{0}^\flat - r_{0})]$ with respective maximum at $\lambda_1$ = 0 and $\lambda_1$ =  $\varepsilon\mathcal{S}(r_{0}^\flat - r_{0})$ (see Fig. \ref{fig:phi}). Consequently, there cannot be a joint solution that maximizes both $\mathcal{R}_1^*(w_1, w_2)$ and $\mathcal{R}_{2}^*(w_1, w_2)$ unless $r_0 = r_0^\flat$, in which case the optimal point coincides with $\mathbf{w}^C$.
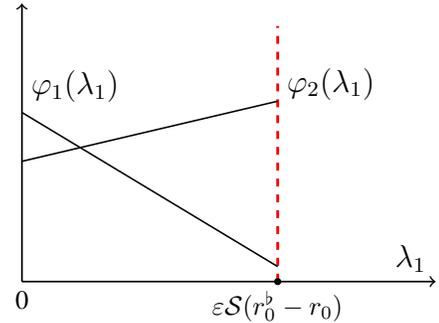
\begin{figure}[!ht]
\centering
\begin{tikzpicture} 
\draw[dashed,line width=1.05 pt,draw=red] (3.4,0) -- (3.4,3.4);
\foreach \coordinate/\label/\pos in { (0,2.6)/{{\large{$\varphi_1 (\lambda_1)$}}}/right, (3.4,2.3)/{\large{$\varphi_2 (\lambda_1)$}}/above right, (3.4,0)/{{$\varepsilon \mathcal{S} (r_0^\flat-r_0)$}}/below,(0,0)/{{0}}/below}
\node[\pos] at \coordinate {\label};
\draw [fill] (3.4,0) circle [radius=0.04]; 
\draw[line width=0.6pt](0,2.25) -- (3.4,0.2) ;
\draw[line width=0.6pt](0,1.6) -- (3.4,2.4) ;
\draw[->, line width=0.64pt] (0.0,0) -- (5.5,0) node[above left] {{\large{$\lambda_1$}}} coordinate(x axis); 
\draw[->, line width=0.64pt] (0,0.0) -- (0,3.7) node[left] {{\large{$$}} }coordinate(y axis); 
\end{tikzpicture} 
\caption{Variations of linear functions $\varphi_1$ and $\varphi_2$ on interval $[0,\varepsilon\mathcal{S}(r_{0}^\flat - r_{0})]$.} 
\label{fig:phi}
\end{figure}

(III) Finally assume $\lambda_1=0$ and $\lambda_2>0$. Solving then each KKT system (\ref{kktPSi}) for 
$i = 1$ and $i = 2$ yields the critical pair ${\bf{w}}^A$ and the value $\lambda_2=\varepsilon \mathcal{S} (r_0^\flat-r_0)$; in particular, we have $\lambda_2>0$ if and only if $r_0<r_0^\flat$. Therefore, ${\bf{w}}^A$ is the pair of optimal wholesale prices when $r_0<r_0^\flat$. Symmetrically, the case $\lambda_1 >$ 0 and $\lambda_2 = $ 0 gives the optimal pair $\textbf{w}^B$ if $r_0<r_0^\flat$. In view of the above properties of functions $\varphi_1$ and $\varphi_2$, however, we easily show that neither $\textbf{w}^A$ nor $\textbf{w}^B$ can be a joint solution that simultaneously maximizes $\mathcal{R}_1^*(w_1, w_2)$ and $\mathcal{R}_2^*(w_1, w_2)$, as required in \eqref{max_11} and \eqref{max_22}. This joint problem has consequently no solution when $r_0<r_0^\flat$.

\end{document}